\documentclass[12 pt]{amsart}
\usepackage{harpoon}
\usepackage{threeparttable}
\usepackage{tabularx,multirow,bigdelim}
\usepackage{algorithm,algorithmic}
\usepackage{graphics,graphicx,subfigure}
\usepackage{graphicx}
\usepackage{subfigure}\usepackage [latin1]{inputenc}
 \newtheorem{theorem}{Theorem}
 \newtheorem{corollary}{Corollary}
 \numberwithin{equation}{section}

\begin{document}
\title[Verifying full quantum network nonlocality in arbitrary configurations by nonlinear Bell-like inequalities]{Verifying full quantum network nonlocality in arbitrary configurations by nonlinear Bell-like inequalities}

\author{Shuyuan Yang, Jinchuan Hou, Kan He$^*$, Mingxing Luo$^*$}
\address[Shuyuan Yang]{School of Mathematics, North University of China, Taiyuan {030051}, China\\
   School of Cyberspace Science and
Technology, Beijing Institute of Technology, Beijing {100081}, China}

 \email{yangshuyuan2000@163.com}

\address[Jinchuan Hou]{College of Mathematics, Taiyuan University of Technology, Taiyuan,
030024, P.R. China}

 \email{jinchuanhou@aliyun.com}

\address[Kan He$^*$]{College of Mathematics, Taiyuan University of Technology, Taiyuan,
030024, P.R. China}

 \email{hekanquantum@163.com}

\address[Mingxing Luo]
{Southwest Jiaotong University, School of Information Science and Technology, Chengdu {610031}, China\\
Hefei National Laboratory, University of Science and Technology of China, Hefei {230088}, China}
\email{ mxluo@swjtu.edu.cn }

\thanks{ $^{*}$Corresponding author}

\begin{abstract}
Full quantum network nonlocality (FQNN) describes a scenario where all sources in a network are nonlocal. Existing criteria of FQNN can only be verified in star networks by violating a single Bell-like inequality. Here we propose a method that certifies FQNN in general quantum networks using only a single Bell-like inequality. We show that the topological obstacle to one-shot detection can be overcome by expanding the original network with a carefully chosen number of auxiliary local sources and parties. The correlations of the enlarged network are then tested with a single inequality; a violation implies that all original sources must be nonlocal. Our approach provides an efficient, experimentally friendly way to verify FQNN in any network topology.

\end{abstract}
\maketitle

\section{Introduction}

Quantum networks distribute entangled states among spatially separated parties, enabling a wealth of quantum communication and computation tasks~\cite{net1,net2,net3,net4,net5,net6,net7,net8,net9,net10,Tavakoli_2022, hou1, hou2}. Characterizing the nonlocal correlations that such networks can produce has therefore become a central problem. Numerous Bell-type inequalities have been developed for specific topologies, including entanglement-swapping~\cite{PhysRevLett.104.170401,PhysRevA.85.032119}, chain~\cite{Mukherjee2015,PhysRevA.102.052222}, star~\cite{PhysRevA.90.062109}, polygon~\cite{PhysRevLett.123.140401,Jing2019}, tree-shaped~\cite{PhysRevA.104.042405}, arbitrary noncyclic~\cite{PhysRevLett.116.010402,PhysRevA.93.030101}, and even general networks~\cite{PhysRevLett.120.140402}.
Violating any such inequality guarantees that at least one source in the network is nonlocal, but it does not certify that every source is nonlocal. The stronger property--full network nonlocality (FNN)--was introduced by Pozas-Kerstjens \textit{et al.}~\cite{PhysRevLett.128.010403} to describe correlations that can only arise when all sources are nonlocal,  ruling out any correlations generated by the hybrid classical and no-signaling models. For instance, the triangular network of Fig.~\ref{1}(a) contains one local source and therefore cannot produce FNN.

\begin{figure}
\centering
{\includegraphics[width=4.3in]{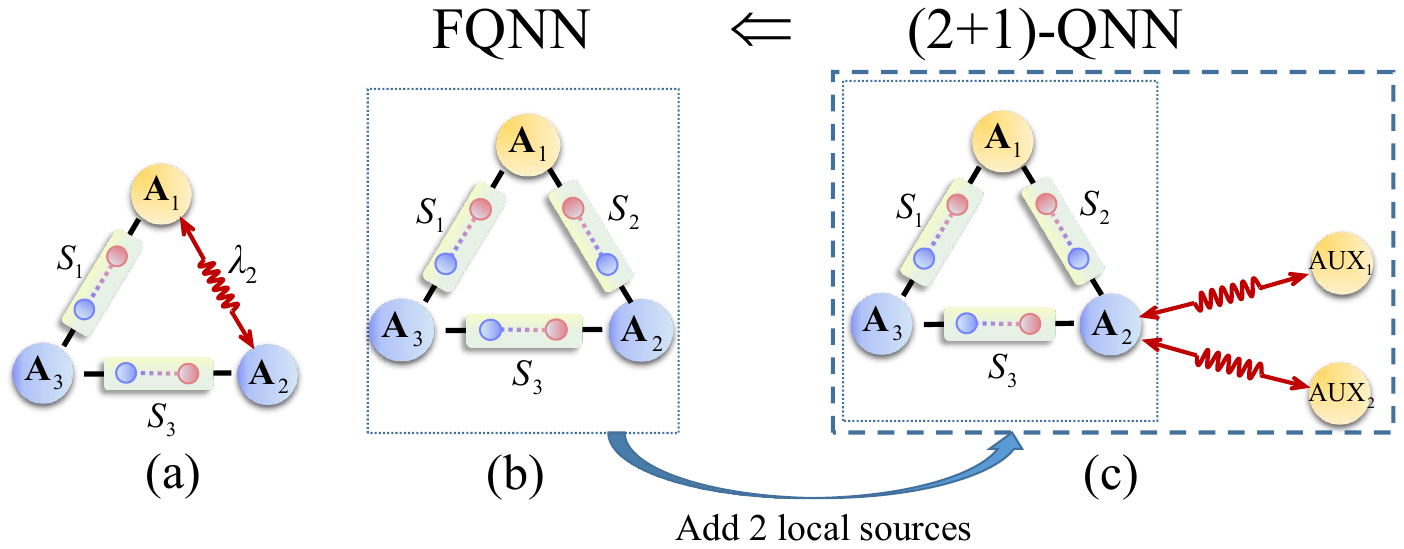}}
  \caption{ The triangular quantum network consists of three sources and three parties:  $S_1,S_2,S_3$ and  ${\bf A}_1,
{\bf A}_2,{\bf A}_{3}$. The yellow circle represents the independent party. The source indicated by the red arrow is a local state, while $(\circ\cdots\circ)$ denotes a nonlocal state. (a) A triangular network comprising two nonlocal sources $S_1$, $S_2$ and one local source, where the local source is described by a hidden variable $\lambda_2$.
 (b)  A triangular network consisting of three nonlocal sources $S_1$, $S_2$, $S_3$.  To witness its FQNN, we add $2$ local sources to ${\bf A}_2$, which results in the new network shown in (c). Now, if network (c) produces $3$-QNN, this implies that network (c) contains at most $2
  $ local sources. It follows that all of the sources in the original network (b) is nonlocal, thereby witnessing its FQNN. }\label{1}
\end{figure}

  FNN is a crucial resource for device-independent quantum key distribution and quantum random number generation, offering security advantages over conventional protocols~\cite{PhysRevLett.120.020504,RevModPhys.89.015004}. Since its introduction, Bell-type inequalities for FQNN were soon verified experimentally~\cite{ar1,PhysRevLett.129.030502,PhysRevLett.130.190201,Wang2023}. Luo \textit{et al.} generalized the concept to \textit{hierarchical} network nonlocality~\cite{PhysRevA.110.022617} and proposed full quantum network nonlocality (FQNN), ruling out any correlations generated by the hybrid quantum and classical models. For a given integer $l$, a correlation is said to exhibit $l$-level quantum network nonlocality ($l$-QNN) if it cannot be reproduced when at least $l$ sources are local (the remaining sources may be arbitrarily nonlocal); otherwise it is $l$-level quantum network local ($l$-QNL). FQNN corresponds to the case $l=1$. Luo \textit{et al.} also proposed inequality criteria for detecting $l$-QNN in chain and star networks. Luo  \textit{et al.}  constructed a chained Bell inequality in networks for distinguishing different types of correlations \cite{Luo2024AdvQuantumTech}.

For networks with arbitrary structure, a natural approach is to decompose the whole network into star-shaped subnetworks and test each subnetwork for FQNN~\cite{PhysRevA.110.022617}. This decomposition strategy, however, requires many separate experimental runs and repeated measurements, quickly becoming prohibitive for large networks. Yang \textit{et al.} recently provided optimized inequalities for acyclic, cyclic, and general networks~\cite{Yang_2024}, but still no single-inequality test for FQNN in an arbitrary network exists.

In this work we present a general method that certifies FQNN in \textit{any} quantum network using only a single Bell-like inequality. The key idea is twofold. First, we derive an $l$-QNL inequality that holds for an arbitrary network and is tight enough to detect the presence of up to $w-1$ local sources, where $w$ is a topological parameter of the network. Second, we enlarge the original network by adding a suitable number of auxiliary local sources (and corresponding parties). If the enlarged network violates the $w$-QNN inequality, then it can contain at most $w-1$ local sources. Because we have explicitly inserted $w-1$ auxiliary local sources, all original sources must be nonlocal--hence the original network exhibits FQNN.
We illustrate the method with concrete examples (triangular, chain, cyclic, and tree-shaped networks) and show that the required inequality can be violated using Werner states and standard optimization techniques. Our scheme eliminates the need for repeated subnetwork testing and offers a practical path toward efficient FQNN certification in complex quantum networks.

This paper is structured as follows. Sec. II presents an inequality criterion for determining $l$-QNL, and uses the example of a triangular network to demonstrate that FQNN of the triangular network can be inferred by increasing local sources.  Sec. III builds upon this foundation to generalize the certification framework to chain networks, cyclic networks, and arbitrary quantum networks. Several concrete examples are also provided to demonstrate the feasibility of the scheme.  Sec. IV. presents the conclusion and discussion.

\section{Theoretical models for verifying FQNN with auxiliary local sources}

In large-scale networks, detecting their FQNN requires decomposing them into a vast number of star subnetworks, which greatly increases the complexity of the experimental process. This urgently raises the question of whether FQNN detection in an arbitrary network can be achieved by using a single Bell-like inequality only. We therefore reasoned about why it has not been possible to establish a Bell-like inequality criterion for FQNN detection in non-star networks. One possible reason is that star networks possess more independent parties in their structure. It then occurred to us that one could make a general network structurally closer to a star network by adding parties at its periphery. Ultimately, we discovered that the hierarchical network nonlocality of the newly constructed network (after adding local sources and parties) can witness the FQNN of the original non-star network. This idea is schematically illustrated in Fig. \ref{1}. Specifically, if we wish to verify that the correlations generated by the triangular quantum network shown in Fig. \ref{1}(b) exhibit FQNN, we may select any non-independent party within this network and introduce $2$ local sources to it (as illustrated in Fig. \ref{1}(c)). If the correlations produced by the network in Fig. \ref{1}(c) can be verified as $3$-QNN, and given that this network already contains $2$ local sources, it necessarily follows that all remaining sources in Fig. \ref{1}(c) are nonlocal. Consequently, this proves that the correlations generated by the original network in Fig. \ref{1}(c) exhibit FQNN.

Based on the preceding analysis, we first introduce an $l$-QNL inequality for arbitrary network structures. We begin by recalling that independent parties refer to spatially separated parties who do not share any common sources.
The independence number of a network being $h$ means that there exist $h$ independent parties in the network. Clearly, for any given quantum network, there always exists a maximum independence number and the corresponding set of maximum independent parties.

Based on the preceding analysis, we first introduce an $l$-QNL inequality for arbitrary network structures. We begin by recalling that independent parties refer to spatially separated parties who do not share any common sources.

Now let us observe an interesting phenomenon. A star network with $n$ parties has \(n-1\) sources and \(n-1\) maximal independent parties. If it contains a local source, then it must satisfy the following condition: (P) \textit{the network contains an independent party that receives only local sources.} However, this observation does not hold in non-star networks. For example, in triangular network shown in Fig. \ref{1}, the independent party is ${\bf A}_1$. Even if we know that there is one local source, it is still not guaranteed that (P) holds true. As illustrated in Fig. \ref{1}(a), the independent party ${\bf A}_1$ receives local and nonlocal states, and therefore ${\bf A}_1$ does not satisfy condition (P). But if there are  2 local sources in it, then (P) must be true.
Based on the characteristics of the sources received by the independent party, we can prove the following more general theorem.
Let
\begin{align}
w=\min \{v \ |\  &{\rm for\ a\ given\ network\ with\ an\ independent-party
\ partition,}\nonumber \\
 &{\rm (P)\  holds\ true \ if \ there \ are} \  v\ {\rm local\ sources}\}. \nonumber
\end{align}

\begin{theorem}\label{main} Let $\Xi(n,m)$ be an arbitrary general network consisting of $n$ parties ${\bf A}_1,{\bf A}_2,...,{\bf A}_n$ and $m$ sources. Its maximum independent number is $h$, and we label the corresponding parties as $\mathbf{A}_{i_1}, \mathbf{A}_{i_2}, \ldots, \mathbf{A}_{i_h}$. Each party ${\bf A}_{i}$ performs binary-input and binary-output measurements $A_{x_i=0}$ and $A_{x_i=1}$.
If the correlations generated by this network are $w$-QNL, then they satisfy the following inequality
\begin{align}\label{l=w(n)}
|I|^{\frac{1}{h}}+|J|^{\frac{1}{h}}\leq
2^{\frac{h-1}{2h}}
\end{align}
where $I=\langle \Pi_{i\in\Gamma}A^+_{x_i}\Pi_{j\in\bar{\Gamma}}A_{x_j=0}\rangle$ and
$J=\langle \Pi_{i\in\Gamma}A^-_{x_i}\Pi_{j\in\bar{\Gamma}}A_{x_j=1}\rangle$ with $A^\pm_{x_i}=\frac{1}{2}(A_{x_i=0}\pm A_{x_i=1})$. Here, $\Gamma=\{i_1,i_2,...,i_h\}$ and $\bar{\Gamma}=\{1,2,...,n\}\setminus\Gamma.$

\end{theorem}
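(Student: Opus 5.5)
By the definition of $w$, $w$-QNL means the correlation admits a model in which at least $w$ sources are local; then condition (P) holds, so among the maximal independent parties $\mathbf{A}_{i_1},\dots,\mathbf{A}_{i_h}$ there is one, say $\mathbf{A}_{i_1}$, that receives only local sources. The plan is to use this distinguished party to factor the correlators, and then to apply the standard Hölder-type argument from the nonlinear star/chain inequalities (Tavakoli--Branciard type bilocality bounds).

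\textbf{Step 1: the hybrid model.} I will write the correlation in the hybrid form. All sources feeding $\mathbf{A}_{i_1}$ are described by hidden variables $\lambda$ with density $\rho(\lambda)$, and the remaining sources are arbitrary nonlocal (quantum) resources. Since the $\mathbf{A}_{i}$, $i\in\Gamma$, share no sources, conditioning on all local hidden variables attached to $\Gamma$ makes the outputs of $\mathbf{A}_{i_1}$ deterministic (or locally random) functions $A_{x_{i_1}}(\lambda)\in[-1,1]$. Then
\[
I=\int d\rho(\lambda)\, A^+_{i_1}(\lambda)\, E^+(\lambda),\qquad J=\int d\rho(\lambda)\, A^-_{i_1}(\lambda)\, E^-(\lambda),
\]
where $E^\pm(\lambda)$ is the conditional expectation of the product of all the other parties' terms. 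Here $|A^+_{i_1}|+|A^-_{i_1}|\le 1$ pointwise, because $\frac12(|a+b|+|a-b|)=\max(|a|,|b|)\le1$.

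\textbf{Step 2: the remaining factors.} I will bound the remaining factors similarly. For each other independent party $i_k$, the marginal operators satisfy $|A^+_{i_k}|+|A^-_{i_k}|\le 1$ in the same sense: either as local response functions, or via the quantum bound $\|A^\pm\|$, using $(A^+)^2+(A^-)^2\le I$. The non-independent parties enter only with the fixed input $0$ in $I$ and $1$ in $J$, so their contribution is bounded by $1$ in absolute value. The aim is to reduce to a product structure: $|I|\le\prod_{k=1}^h \langle |A^+_{i_k}|\rangle$-type factors and $|J|\le\prod_k\langle|A^-_{i_k}|\rangle$, with nonnegative numbers $a_k^\pm$ satisfying $a_k^++a_k^-\le c_k$. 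I would then apply the Hölder/Mahler inequality
\[
\prod_k (a_k^+)^{1/h}+\prod_k (a_k^-)^{1/h}\le \prod_k (a_k^++a_k^-)^{1/h}.
\]

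\textbf{Step 3: the constant $2^{\frac{h-1}{2h}}$.} This gives one factor bounded by $1$ for the local party $\mathbf{A}_{i_1}$ and $h-1$ factors for the quantum-or-arbitrary parties. In the known $l$-QNL star inequalities, each nonlocal (quantum) party contributes at most $\sqrt2$, which is a Tsirelson-type bound for $\langle A^+B\rangle,\langle A^-B'\rangle$ summed in absolute value. That yields $\big(\sqrt2^{\,h-1}\big)^{1/h}=2^{\frac{h-1}{2h}}$.

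\textbf{Main obstacle.} The hardest part is Step 2. After conditioning on $\lambda$ at $\mathbf{A}_{i_1}$, the other parties in $\Gamma$ are still entangled with non-independent parties through quantum sources. One must show that the conditional quantity splits as $|E^+|\le\prod_{k\ge2}$ (something) with pairwise sums at most $\sqrt2$ each, despite the non-independent parties correlating several independent ones. My first attempt is to group each independent party $i_k$ with a disjoint share of the neighbouring sources, and to bound $|\langle A^+_{i_k}\otimes X\rangle|+|\langle A^-_{i_k}\otimes Y\rangle|\le\sqrt2$ for contractions $X,Y$ via Cauchy--Schwarz, $\sqrt{\langle (A^+)^2\rangle}+\sqrt{\langle (A^-)^2\rangle}\le\sqrt{2\langle (A^+)^2+(A^-)^2\rangle}\le\sqrt2$. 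I would then chain these bounds with the Hölder inequality over the product.
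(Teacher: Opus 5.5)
Your architecture is the paper's. (P) gives an independent party $\mathbf{A}_t$ fed only by local sources, which gets weight at most $1$ since $|A^+_{x_t}|+|A^-_{x_t}|\le 1$. The other $h-1$ independent parties get the Tsirelson-type $\sqrt2$, and a H\"older step combines them; the paper's $\theta,\nu$ parametrization is exactly H\"older with exponents $\frac{h-1}{h}$ and $\frac1h$.

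The paper never carries out your Step 2. Writing $R^\pm$ for the product over all other parties, it bounds $|\langle A^+_{x_t}R^+\rangle|\le|\langle A^+_{x_t}\rangle|\,|\langle R^+\rangle|$ and quotes $|\langle R^+\rangle|^{\frac{1}{h-1}}+|\langle R^-\rangle|^{\frac{1}{h-1}}\le\sqrt2$ for $(h-1)$-independent networks. That factorization fails when $\mathbf{A}_t$'s local sources also feed non-independent parties: a shared uniform bit gives $\langle A^+_{x_t}\rangle=0$ with $I\neq0$. So your conditioning in Step 1 is the right start.

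But then everything rests on Step 2, which you leave open, and your sketch does not close it:
(i) $|A^+|+|A^-|\le 1$ is false for quantum observables. For $A_0=\sigma_z$, $A_1=\sigma_x$ one gets $|A^+|+|A^-|=\sqrt2\,\mathbf{1}$. What holds is $\langle (A^+)^2\rangle^{1/2}+\langle (A^-)^2\rangle^{1/2}\le\sqrt2$, from $(A^+)^2+(A^-)^2=\frac12(A_0^2+A_1^2)\le\mathbf{1}$.
(ii) Per-party estimates $|\langle A^+\otimes X\rangle|+|\langle A^-\otimes Y\rangle|\le\sqrt2$ do not chain. H\"older needs multiplicative bounds $|I|\le\prod_k a_k$ and $|J|\le\prod_k b_k$, while $X,Y$ still contain the other independent parties and the non-independent ones coupling them.
(iii) Your bound on $E^\pm(\lambda)$ must be uniform in $\lambda$. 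Suppose you only knew $|E^+(\lambda)|^{\frac{1}{h-1}}+|E^-(\lambda)|^{\frac{1}{h-1}}\le\sqrt2$ for each $\lambda$. Then for $h=2$, $\mathbf{A}_{i_1}$ could set $A_{x_{i_1}=0}=A_{x_{i_1}=1}$ where $|E^+(\lambda)|=1$ and $A_{x_{i_1}=0}=-A_{x_{i_1}=1}$ where $|E^-(\lambda)|=1$, each on half the $\lambda$'s. Nothing would then exclude $|I|^{1/2}+|J|^{1/2}=\sqrt2>2^{1/4}$.

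The missing idea is one global Cauchy--Schwarz step after conditioning on the hidden variables $\vec\lambda$ of \emph{all} local sources. Given $\vec\lambda$, the quantum sources are in a product state. Let $\Lambda_k$ be the set of sources of $\mathbf{A}_k$; then $\mathbf{A}_t$ contributes the number $A^+_{x_t}(\lambda_{\Lambda_t})$. Take the commuting Hermitian operators $P=\prod_{k\in\Gamma\setminus\{t\}}A^+_{x_k}$ and $Q=\prod_{j\in\bar\Gamma}A_{x_j=0}$, with $\|Q\|\le1$. Then
\[
|I(\vec\lambda)|\le |A^+_{x_t}(\lambda_{\Lambda_t})|\,\langle P^2\rangle^{1/2}\langle Q^2\rangle^{1/2}\le |A^+_{x_t}(\lambda_{\Lambda_t})|\prod_{k\in\Gamma\setminus\{t\}}\langle (A^+_{x_k})^2\rangle^{1/2}.
\]
The factorization $\langle P^2\rangle=\prod_k\langle (A^+_{x_k})^2\rangle$ is exactly where independence enters: the sets $\Lambda_k$, $k\in\Gamma$, are pairwise disjoint and the sources are independent. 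The $k$-th factor depends only on $\lambda_{\Lambda_k}$, so the average over $\vec\lambda$ factorizes too. This gives $|I|\le\prod_{k\in\Gamma}a_k$ and likewise $|J|\le\prod_{k\in\Gamma}b_k$, where $a_t+b_t=\int(|A^+_{x_t}|+|A^-_{x_t}|)\,d\rho\le1$ and $a_k+b_k\le\sqrt2$ for $k\neq t$. Your Mahler--H\"older inequality then yields $2^{\frac{h-1}{2h}}$. The result is a self-contained proof needing neither the external $(h-1)$-network bound nor the paper's factorization.
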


\begin{proof} Given that at least one party among the independent parties receives only local sources, we select one such party and denote it as $\mathbf{A}_t$, where $t \in \Gamma$. Therefore, we can derive
\begin{align}
&|I|^{\frac{1}{h}}+|J|^{\frac{1}{h}} \nonumber \\
=&|\langle \prod_{i\in\Gamma}A^+_{x_i}\prod_{j\in\bar{\Gamma}}A_{x_j=0}\rangle|^{\frac{1}{h}}+|\langle \prod_{i\in\Gamma}A^-_{x_i}\prod_{j\in\bar{\Gamma}}A_{x_j=1}\rangle|^{\frac{1}{h}} \nonumber \\
\leq&|\langle A^+_{x_t}\rangle|^{\frac{1}{h}}
|\langle \prod_{i\in\Gamma\setminus t}A^+_{x_i}\prod_{j\in\bar{\Gamma}}A_{x_j=0}\rangle|^{\frac{1}{h}}  \nonumber \\
&+|\langle A^-_{x_t}\rangle|^{\frac{1}{h}}
|\langle \prod_{i\in\Gamma\setminus t}A^-_{x_i}\prod_{j\in\bar{\Gamma}}A_{x_j=1}\rangle|^{\frac{1}{h}}.
\end{align}
Clearly, by eliminating the independent party ${\bf A}_t$ from the network, the remaining network becomes ($h-1$)-independent.
 Ref. \cite{PhysRevLett.120.140402} provided a quantum bound for the ($h-1$)-independent network, leading to the following result:
\begin{align*}
&|\langle \prod_{i\in\Gamma\setminus t}A^+_{x_i}\prod_{j\in\bar{\Gamma}}A_{x_j=0}\rangle|^{\frac{1}{h-1}}+|\langle \prod_{i\in\Gamma\setminus t}A^-_{x_i}\prod_{j\in\bar{\Gamma}}A_{x_j=1}\rangle|^{\frac{1}{h-1}}\nonumber \\
&\leq\sqrt{2}.
\end{align*}
Therefore,
\begin{align}
|\langle \prod_{i\in\Gamma\setminus t}A^+_{x_i}\prod_{j\in\bar{\Gamma}}A_{x_j=0}\rangle|^{\frac{1}{h-1}}\leq\sqrt{2}{\rm cos}^2\theta,\\
|\langle \prod_{i\in\Gamma\setminus t}A^-_{x_i}\prod_{j\in\bar{\Gamma}}A_{x_j=1}\rangle|^{\frac{1}{h-1}}\leq\sqrt{2}{\rm sin}^2\theta,
\end{align}
 for some $\theta\in[0,\pi/2]$.
Furthermore, since the party ${\bf A}_t$ is independent, we have $|\langle A_{x_t}^+\rangle \pm \langle A_{x_t}^-\rangle|\leq1$. One can define $\langle A_{x_t}^+\rangle\leq {\rm cos}^2\nu$ and $\langle A_{x_t}^-\rangle\leq {\rm sin}^2\nu$ for some $\nu\in[0,\pi/2]$.
Based on the above discussion, we have
\begin{align}
&|I|^{\frac{1}{h}}+|J|^{\frac{1}{h}} \nonumber \\
\leq&2^{\frac{h-1}{2h}}[{\rm cos}^{\frac{2(h-1)}{h}}\theta {\rm cos}^{\frac{2}{h}}\nu+{\rm sin}^{\frac{2(h-1)}{h}}\theta {\rm sin}^{\frac{2}{h}}\nu] \nonumber \\
\leq&2^{\frac{h-1}{2h}}.
\end{align}
Here, the last inequality comes from  the fact that the binary continuous function $f(\theta,\nu)={\rm cos}^{\frac{2(h-1)}{h}}\theta {\rm cos}^{\frac{2}{h}}\nu+{\rm sin}^{\frac{2(h-1)}{h}}\theta {\rm sin}^{\frac{2}{h}}\nu$ has a unique extremum point, which is given by $\theta=\nu$.
\end{proof}

{\bf Remark. }
It is clear that the violation of Ineq. (\ref{l=w(n)}) implies that the number of local sources in the network is at most $w-1$.
Clearly, for any given network structure and partition of independent parties, the value of
$w$ can be uniquely determined.
Note that the number $w$ can be calculated using the following formula
\begin{align}\label{g3.6}
w = &\left\{
\begin{aligned}
&\sum_{i\in\Gamma}|\Lambda_i|-h+1,
&&\qquad \qquad\mbox{ for }\sum_{i\in\Gamma}|\Lambda_i|=m;\\
&\sum_{i_1,i_2\in\bar{\Gamma}}|\Lambda_{i_1}\cap\Lambda_{i_2}|+\sum_{i\in\Gamma}|\Lambda_i|-h+1,
&&\qquad\qquad \mbox{ for }\sum_{i\in\Gamma}|\Lambda_i|<m,
\end{aligned}
\right.
\end{align}
where $\Gamma=\{i_1,i_2,...,i_h\}$, $\bar{\Gamma}=\{1,2,\cdots,n\}\setminus\Gamma$,
and $|\Lambda_i|$ represent the number of elements in the set $\Lambda_i$.
In other words, for any given quantum network $\Xi(n,m)$, we can always use Eq. (\ref{g3.6}) to compute $w$ and construct the corresponding inequality criterion for detecting $w$-QNN based on Ineq. (\ref{l=w(n)}). If the inequality  can be violated, it always indicates that the network contains at most $w-1$ local states.

If it is known that there are already $w-1$ local sources in the network, then it can naturally be inferred that all remaining sources in the network are nonlocal. Therefore, based on this principle, we can modify the topological structure by increasing local sources and the corresponding parties (as shown in Fig. \ref{1} (c)), and infer that the network described in Fig. \ref{1} (b) generates the FQNN.
 Starting from this foundation,  we attempt to provide a method for determining whether the correlations generated by the network are of FQNN.  As an example, we infer the FQNN of the triangular network shown in Fig. \ref{1} (b) using this method.

\textit{Example 1}. (Small-size networks) We explain the proposed FQNN detection method using the case of a triangular network.  For the triangular network $\Xi(3,3)$ shown in Fig. \ref{1} (b), it consists of 3 parties and 3 sources. There exist at most 1 independent party, i.e., $h=1$.
Assume that ${\bf A}_1$ is the independent party.

To detect FQNN of network $\Xi(3,3)$, we introduce two auxiliary local sources and corresponding parties into the original network $\Xi(3,3)$, as  illustrated in Fig. \ref{1} (c).
The newly formed network consists of 5 parties, namely ${\bf A_1}, {\bf A_2}, {\bf A_3}, {\rm AUX_1},  {\rm AUX_2}$, and 5 sources, namely $S_1, S_2, S_3, S_4, S_5$.
For the sake of clarity, we denote the auxiliary parties ${\rm AUX_1}$ and  ${\rm AUX_2}$ as ${\bf A}_4$ and ${\bf A}_5$, respectively.
Among them, $S_4$ and $S_5$ are both auxiliary local sources.
We denote the newly formed network as $\Xi^{{\rm new}}(3,3)$.

For network $\Xi^{{\rm new}}(3,3)$, $n=5$ and $m=5$. There is at most 3 independent parties, i.e., $h=3$. $\Gamma=\{1,4,5\}$. From Eq. (\ref{g3.6}), we get $w=3.$ Thus, Ineq. (\ref{l=w(n)}) can be concretized as
\begin{align}\label{5node}
|I_1|^{\frac{1}{3}}+|J_1|^{\frac{1}{3}}\leq
2^{\frac{1}{3}}
\end{align}
where $I_1=\langle \Pi_{i\in\Gamma}A^+_{x_i}\Pi_{j\in\bar{\Gamma}}A_{x_j=0}\rangle$ and
$J_1=\langle \Pi_{i\in\Gamma}A^-_{x_i}\Pi_{j\in\bar{\Gamma}}A_{x_j=1}\rangle$ with $\Gamma=\{1,4,5\}$ and $\bar{\Gamma}=\{2,3\}$. If we choose the measurement settings for each party as
\begin{align}
&A_{x_1=0}=\sum_{r_1,r_2=0}^3\alpha^{10}_{r_1,r_2}\sigma_{r_1} \otimes\sigma_{r_2},  \\
&A_{x_1=1}=\sum_{r_1,r_2=0}^3\alpha^{11}_{r_1,r_2=0}\sigma_{r_1} \otimes\sigma_{r_2}, \\
&A_{x_2=0}=\sum_{u,v,w,l=0}^3\alpha^{20}_{u,v,w,l}\sigma_{u}\otimes\sigma_{v} \otimes \sigma_{w}\otimes\sigma_{l},  \\
&A_{x_2=1}=\sum_{u,v,w,l=0}^3\alpha^{21}_{u,v,w,l}\sigma_{u}\otimes\sigma_{v} \otimes \sigma_{w}\otimes\sigma_{l},  \\ &A_{x_3=0}=\sum_{i,j=0}^3\alpha^{30}_{i,j}\sigma_{i} \otimes\sigma_{j},  \\
&A_{x_3=1}=\sum_{i,j=0}^3\alpha^{31}_{i,j}\sigma_{i} \otimes\sigma_{j}, \\
&A_{x_t=0}=\sum_{r_3=0}^3\alpha^{t0}_{r_3}\sigma_{r_3},\,\,\,A_{x_t=1}=\sum_{r_3=0}^3\alpha^{t1}_{r_3}\sigma_{r_3}
\end{align}
for any $t\in\{4,5\}$. For any observable, it is always required that the absolute values of its eigenvalues are at most 1.  Assume that each source in the network $\Xi^{\rm new}(3,3)$ is a Werner state, given by $S_i=p_i|\psi^-\rangle\langle\psi^-|+\frac{1-p_i}{4}I_2\otimes I_2$, where
  $|\psi^-\rangle=\frac{1}{\sqrt{2}}(|01\rangle-|10\rangle).$
 Then, we get
\begin{align}
I_{1}=&\sum_{i,j,u,v,w,l=0}^3\alpha^{20}_{u,j,w,l}
\alpha^{30}_{i,j}(\alpha_{i,u}^{10}+\alpha_{i,u}^{11})(\alpha_{w}^{40}+\alpha_{w}^{41})\nonumber \\
&\times
(\alpha_{l}^{50}+\alpha_{l}^{51})(-p_1)^i(-p_2)^u(-p_3)^j(-p_4)^w(-p_5)^l,\\
J_{1}=&\sum_{i,j,u,v,w,l=0}^3\alpha^{21}_{u,j,w,l}
\alpha^{31}_{i,j}(\alpha_{i,u}^{10}-\alpha_{i,u}^{11})(\alpha_{w}^{40}-\alpha_{w}^{41})\nonumber \\
&\times
(\alpha_{l}^{50}-\alpha_{l}^{51})(-p_1)^i(-p_2)^u(-p_3)^j(-p_4)^w(-p_5)^l.\end{align}
Here,  $(-p)^k =(-p)^{1-\delta(k,0)}$.
Given $p_1=p_2=p_3=0.9$ and $p_4=p_5=\frac{1}{3}$, we employ the Sequential Least Squares Quadratic Programming (SLSQP) algorithm to maximize the left-hand side  of Ineq. (\ref{5node}). The optimization yields a maximum value of $|I_1|^{1/3}+|J_1|^{1/3} = 1.357$, which demonstrates that Ineq. (\ref{5node}) can be violated. That is to say, if the network contains local sources, one can always employ the SLSQP algorithm to find suitable measurements that violate the inequality, thereby demonstrating its validity. This violation implies that the network contains at most two local sources.

Considering the current distribution of sources in the network, where it is known that two local sources already exist, we can conclude that all remaining sources in the triangular network are nonlocal. This means that we can determine the correlations generated by the network $\Xi(3,3)$ are FQNN using the violation of Ineq. (\ref{5node}).

Subsequently, we extend the methods presented in this section to more general network structures.
In the following figures,
the yellow circles represent independent parties.
Auxiliary local  sources are represented by red wavy lines, and ${\rm AUX}_i$ denotes the newly added auxiliary party.

\section{General networks}
In this section, we show that the scheme proposed in Section II is applicable to general networks.

\subsection{Chain networks}

Based on the discussion in Section II, we witness the FQNN of a chain network via the following two propositions.

For a chain network $\Xi(n,n-1)$ with $n$ parties and $n-1$ sources, the maximum number of independent parties is $\frac{n}{2}$, and all parties with odd indices are independent. By adding $\frac{n}{2}-1$ auxiliary local sources to party ${\bf A}_i$ ($i\in\{2,4,...,n\}$) and its connected parties, we construct a new network $\Xi^{\rm new}(n,n-1)$. We then obtain from Theorem \ref{main} the following results.

\begin{corollary}
The network $\Xi(n,n-1)$ (even $n$) is FQNN if there are  measurements in $\Xi^{\rm new}(n,n-1)$ such that the generated correlations violate the inequality
\begin{align}
|I_{\rm chae}|^{\frac{1}{n-1}}+|J_{\rm chae}|^{\frac{1}{n-1}}\leq2^{\frac{n-2}{2n-2}},
\end{align}
where $I_{\rm chae}=\langle \Pi_{i\in\Gamma^{\rm new}}A^+_{x_i}\Pi_{j\in\bar{\Gamma}^{\rm new}}A_{x_j=0}\rangle$ and
$J_{\rm chae}=\langle \Pi_{i\in\Gamma^{\rm new}}A^-_{x_i}\Pi_{j\in\bar{\Gamma}^{\rm new}}A_{x_j=1}\rangle$ with $\Gamma^{\rm new}=\{1,3,...,n-1,n+1,\ldots,\frac{3n}{2}-1\}$ and  $\bar{\Gamma}^{\rm new}=\{2,4,...,n\}$.
\end{corollary}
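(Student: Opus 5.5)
The plan is to apply Theorem \ref{main} to the enlarged network $\Xi^{\rm new}(n,n-1)$, in the same way as Example 1, and then count local sources. The argument has three steps: fix the structure of $\Xi^{\rm new}(n,n-1)$, compute $h$ and $w$ for it, and translate a violation of the inequality into nonlocality of every original source.

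\textbf{Step 1: structure of the enlarged network.} The original chain has $n$ parties and $n-1$ sources $S_i$ linking ${\bf A}_i$ and ${\bf A}_{i+1}$. The odd parties ${\bf A}_1,{\bf A}_3,\dots,{\bf A}_{n-1}$ form a maximum independent set of size $n/2$. We attach $\frac{n}{2}-1$ auxiliary local sources to even-indexed parties. Each such source connects an even party to a fresh auxiliary party ${\bf A}_{n+1},\dots,{\bf A}_{\frac{3n}{2}-1}$. Each auxiliary party receives exactly one source, and that source is local. Hence the auxiliary parties are independent of the odd parties and of each other. The set $\Gamma^{\rm new}=\{1,3,\dots,n-1,n+1,\dots,\frac{3n}{2}-1\}$ is therefore independent, with size $h=\frac n2+\frac n2-1=n-1$. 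It is maximum, because every even party is adjacent to an odd party or an auxiliary party and so cannot be added. The complement $\bar\Gamma^{\rm new}=\{2,4,\dots,n\}$ then agrees with the statement.

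\textbf{Step 2: computing $w$ (the main obstacle).} We need $w$, the least number of local sources that forces some independent party to receive only local sources, for this partition. The formula (\ref{g3.6}) is only asserted in the Remark, so I would verify its value directly by a pigeonhole argument.

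The independent parties cover every source. The odd parties cover all $n-1$ chain sources, and each auxiliary party covers its own source. Thus $\sum_{i\in\Gamma^{\rm new}}|\Lambda_i|=(n-1)+(\frac n2-1)=m^{\rm new}$, which is the first case of (\ref{g3.6}). That case gives $w=m^{\rm new}-h+1=\frac{3n}{2}-2-(n-1)+1=\frac n2$.

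Directly: the sets $\Lambda_i$, $i\in\Gamma^{\rm new}$, partition the sources. Suppose no independent party receives only local sources. Then each of the $h$ parties has at least one nonlocal source, so at most $m^{\rm new}-h=\frac n2-1$ sources are local. Consequently, $\frac n2$ local sources force (P).

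Theorem \ref{main} is stated for $w$-QNL. A correlation that is $w'$-QNL with $w'\ge w$ has at least $w$ local sources, and is therefore also covered, so it suffices to use $\frac n2\le h$. Substituting $h=n-1$ into (\ref{l=w(n)}) gives exactly the displayed bound $2^{\frac{n-2}{2n-2}}$.

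\textbf{Step 3: conclusion.} Suppose the inequality is violated. By Theorem \ref{main}, the correlations of $\Xi^{\rm new}(n,n-1)$ are not $\frac n2$-QNL, so at most $\frac n2-1$ sources are local. The $\frac n2-1$ auxiliary sources are local by construction, so they exhaust this budget. Every original source $S_1,\dots,S_{n-1}$ must therefore be nonlocal.

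Finally, the auxiliary parties and sources are independent of the original chain. The marginal correlations of the original parties are therefore produced by the original chain with these nonlocal sources, which is the FQNN of $\Xi(n,n-1)$. I would state this marginalisation remark explicitly, since the statement concerns the original network.
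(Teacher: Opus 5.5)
Your argument is the paper's. You build $\Xi^{\rm new}(n,n-1)$, read off $h=n-1$ and $w=\frac n2$ (the first case of Eq.~(\ref{g3.6})), apply Theorem~\ref{main}, and let the $\frac n2-1$ auxiliary local sources use up the allowance of at most $w-1$ local sources. Your pigeonhole check of $w$ is a welcome replacement for the unproved formula. Because the $\Lambda_i$, $i\in\Gamma^{\rm new}$, partition the $\frac{3n}{2}-2$ sources, avoiding (P) needs at least $h$ nonlocal sources. The check also correctly uses only the direction that $\frac n2$ local sources force (P).

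Two corrections are needed.

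In Step 1, ``no even party can be added'' shows only that $\Gamma^{\rm new}$ is maximal, not that it has maximum size. For the size bound, use the pairs $\{{\bf A}_{2k-1},{\bf A}_{2k}\}$, $k=1,\dots,\frac n2$. They are disjoint and each pair shares the source $S_{2k-1}$. So any independent set omits at least one party from each pair, and has at most $(\frac{3n}{2}-1)-\frac n2=n-1$ parties.

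More importantly, drop the final marginalisation paragraph. Its premise is false: each auxiliary source is received by an even-indexed party of the chain, so the auxiliary sources are not independent of it. Its conclusion also does not follow. The violation does not show that the chain marginal is $1$-QNN (not reproducible with one local source). To get that by contraposition, you would need the following: any model reproducing the marginal with some $S_k$ local extends to a model of the full enlarged correlations with $S_k$ and the auxiliary sources local. This fails in general, because the even parties measure jointly on chain and auxiliary subsystems. The paper's conclusion is exactly your Step 3: every original source $S_1,\dots,S_{n-1}$ must be nonlocal. The paper does not claim more, and your argument is complete without that paragraph.
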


When $n$ is odd the maximum number of independent parties chain network $\Xi(n,n-1)$ is $\frac{n+1}{2}$, and all parties with odd indices are independent. By adding $\frac{1}{2}(n-3)$ auxiliary local sources to party ${\bf A}_i$ ($i\in\{2,4,...,n-1\}$) and its connected parties, we construct a new network $\Xi^{\rm new}(n,n-1)$. We then obtain another result.

\begin{corollary}
The network $\Xi(n,n-1)$ (odd $n$) is FQNN if there are  measurements in $\Xi^{\rm new}(n,n-1)$ such that the generated correlations violate the inequality
\begin{align}
|I_{\rm chao}|^{\frac{1}{n-1}}+|J_{\rm chao}|^{\frac{1}{n-1}}\leq2^{\frac{n-2}{2n-2}},
\end{align}
where $I_{\rm chao}=\langle \Pi_{i\in\Gamma^{\rm new}}A^+_{x_i}\Pi_{j\in\bar{\Gamma}^{\rm new}}A_{x_j=0}\rangle$ and
$J_{\rm chao}=\langle \Pi_{i\in\Gamma^{\rm new}}A^-_{x_i}\Pi_{j\in\bar{\Gamma}^{\rm new}}A_{x_j=1}\rangle$ with $\Gamma^{\rm new}=\{1,3,...,n,n+1,\ldots,\frac{3n-3}{2}\}$ and  $\bar{\Gamma}^{\rm new}=\{2,4,...,n-1\}$.
\end{corollary}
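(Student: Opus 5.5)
The plan is to derive the corollary directly from Theorem \ref{main}, applied to the enlarged network $\Xi^{\rm new}(n,n-1)$ with $n$ odd. The argument follows Example 1: first determine the structural data $h$ and $w$ of the new network, then check that the theorem's inequality coincides with the stated one, and finally use the count of local sources to conclude that every original source is nonlocal.

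\textbf{Step 1: structural data.} In $\Xi(n,n-1)$ with $n$ odd, the maximum independent set consists of the odd-indexed parties ${\bf A}_1,{\bf A}_3,\dots,{\bf A}_n$. There are $\frac{n+1}{2}$ of them, and together they receive all $n-1$ sources. We attach $\frac{n-3}{2}$ auxiliary local sources, each feeding one new party ${\rm AUX}_k$ (labelled ${\bf A}_{n+k}$) and connected to some even-indexed party. Each auxiliary party shares a source only with its even-indexed neighbour, so it is independent of every odd-indexed party and of every other auxiliary party. Hence
\[
h=\tfrac{n+1}{2}+\tfrac{n-3}{2}=n-1,
\]
with $\Gamma^{\rm new}=\{1,3,\dots,n,n+1,\dots,\frac{3n-3}{2}\}$. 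This set is maximal because every remaining party is even-indexed and adjacent to an independent one.

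\textbf{Step 2: computing $w$.} The independent parties cover all sources, so
\[
\sum_{i\in\Gamma^{\rm new}}|\Lambda_i| = m^{\rm new} = (n-1)+\tfrac{n-3}{2},
\]
and the first branch of Eq.~(\ref{g3.6}) applies:
\[
w=\sum_{i\in\Gamma^{\rm new}}|\Lambda_i|-h+1=(n-1)+\tfrac{n-3}{2}-(n-1)+1=\tfrac{n-1}{2}.
\]
Since $w-1=\frac{n-3}{2}$ equals the number of auxiliary local sources, the construction is balanced exactly as required. Substituting $h=n-1$ into Ineq.~(\ref{l=w(n)}) gives the right-hand side $2^{\frac{h-1}{2h}}=2^{\frac{n-2}{2n-2}}$, with $I_{\rm chao},J_{\rm chao}$ the corresponding $I,J$.

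\textbf{Step 3: conclusion.} If the inequality is violated, the contrapositive of Theorem \ref{main} says the correlations of $\Xi^{\rm new}$ are $w$-QNN. So at most $w-1=\frac{n-3}{2}$ sources are local. All $\frac{n-3}{2}$ auxiliary sources are local by construction, so none of the $n-1$ original sources can be local, and the restriction to $\Xi(n,n-1)$ is FQNN.

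The main obstacle is Step 2. One has to confirm that the formula (\ref{g3.6}) genuinely gives the minimal $w$ for which property (P) is forced. In particular, when $\frac{n-1}{2}$ sources are local, pigeonhole must force some independent party to receive only local sources. Each odd-interior party needs two local sources, but an endpoint party or an auxiliary party needs only one. I would verify this directly by counting, over the independent parties, how many local sources can be placed without any party being fully local. If the count reveals an off-by-one discrepancy, I would adjust the number of auxiliary sources accordingly rather than rely on the formula blindly.
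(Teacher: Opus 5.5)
Your proposal is correct and follows the same route as the paper, which obtains the corollary by applying Theorem~\ref{main} to $\Xi^{\rm new}(n,n-1)$ with $h=n-1$ and $w=\frac{n-1}{2}$ (first branch of Eq.~(\ref{g3.6})), and then using that the $\frac{n-3}{2}=w-1$ auxiliary sources are already local. The check you flagged goes through with no off-by-one: every source reaches exactly one of the $n-1$ independent parties, so avoiding (P) needs at least $n-1$ nonlocal sources and leaves room for at most $\frac{n-3}{2}$ local ones; also, $n-1$ is the maximum (not merely a maximal) independent-set size, since any independent set contains at most $\frac{n+1}{2}$ chain parties.
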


\textit{Example 2}. We certify whether the correlations produced by the network in Fig. \ref{chain} (a) are FQNN.
This is achieved by augmenting the original network with an auxiliary local source and an additional party. Specifically, as shown in Fig.
\ref{chain} (b), a local source and a party $\text{AUX}_1$ are appended to the configuration in Fig. \ref{chain} (a).
For simplicity, let ${\rm AUX}_1$ denote ${\bf A}_6$. It follows from Eq. (\ref{g3.6}) that we get $w=2$. If the correlations generated by this network are $2$-QNL, according to Ineq. (\ref{l=w(n)}), we can derive
 \begin{eqnarray}\label{ch5}
|I_2|^{\frac{1}{4}}+|J_2|^{\frac{1}{4}}\leq
2^{\frac{3}{8}}
\end{eqnarray}
where  $I_2=\langle A_{x_1}^+A_{{x_2}=0}A_{x_3}^+A_{{x_4}=0}A_{x_5}^+A_{x_6}^+\rangle$
and
$J_2=\langle A_{x_1}^-A_{{x_2}=1}A_{x_3}^-A_{{x_4}=1}A_{x_5}^-A_{x_6}^-\rangle$.
Here, $A_{x_i=0}$ and $A_{x_i=1}$ are observables for party $\mathbf{A}_i$, for all $i \in {1,2,\ldots,6}$.
In this case, the measurement schemes implemented by each party are considered to be
\begin{align}
&A_{x_s=0}=\sum_{r_1=0}^3\alpha^{s0}_{r_1}\sigma_{r_1},\,\,\,A_{x_s=1}=\sum_{r_1=0}^3\alpha^{s1}_{r_1}\sigma_{r_1},\\
&A_{x_2=0}=\sum_{r_1,r_2=0}^3\alpha^{20}_{r_1,r_2}\sigma_{r_1} \otimes\sigma_{r_2},  \\
&A_{x_2=1}=\sum_{r_1,r_2=0}^3\alpha^{21}_{r_1,r_2}\sigma_{r_1} \otimes\sigma_{r_2}, \\
&A_{x_3=0}=\sum_{i,j=0}^3\alpha^{30}_{i,j}\sigma_{i} \otimes\sigma_{j},  \\
&A_{x_3=1}=\sum_{i,j=0}^3\alpha^{31}_{i,j}\sigma_{i} \otimes\sigma_{j}, \\
&A_{x_4=0}=\sum_{u,v,w=0}^3\alpha^{40}_{u,v,w}\sigma_{u}\otimes\sigma_{v} \otimes \sigma_{w},  \\
&A_{x_4=1}=\sum_{u,v,w=0}^3\alpha^{41}_{u,v,w}\sigma_{u}\otimes\sigma_{v} \otimes \sigma_{w},
\end{align}
for any $s\in\{1,5,6\}$. Here, the measurement parameters must be chosen such that the maximum absolute value of the eigenvalues of the observable is less than or equal to 1. Then we get
\begin{align}
I_2=&\sum_{i,j,u,v,w=0}^3\alpha_{i,j}^{20}\alpha_{u,v,w}^{40}(\alpha_{i}^{10}+\alpha_i^{11})(\alpha^{30}_{ju}+\alpha^{31}_{ju}) \nonumber \\&\times(\alpha_{v}^{50}+\alpha_{v}^{51})(\alpha_{w}^{60}+\alpha_{w}^{61})(-p_1)^i(-p_2)^j\nonumber \\
&\times(-p_3)^u(-p_4)^v(-p_5)^w,\\
J_2=&\sum_{i,j,u,v,w=0}^3\alpha_{i,j}^{21}\alpha_{u,v,w}^{41}(\alpha_{i}^{10}-\alpha_i^{11})(\alpha^{30}_{ju}-\alpha^{31}_{ju}) \nonumber \\&\times(\alpha_{v}^{50}-\alpha_{v}^{51})(\alpha_{w}^{60}-\alpha_{w}^{61})(-p_1)^i(-p_2)^j \nonumber \\
&\times(-p_3)^u(-p_4)^v(-p_5)^w.
\end{align}
Here,  $(-p)^k =(-p)^{1-\delta(k,0)}$.
Take $p_1=p_2=p_3=p_4=0.9$ and $p_5=\frac{1}{3}$, we employ SLSQP algorithm to maximize the left-hand side  of Ineq. (\ref{ch5}). The optimization yields a maximum value of $|I_2|^{1/4}+|J_2|^{1/4} =1.315$, which demonstrates that Ineq. (\ref{ch5}) can be violated.

The violation of Ineq. (\ref{ch5}) implies that the network contains at most one local source. In fact,  there is already a local source present in the network. Using this fact, we conclude that all sources $S_1, S_2, S_3, S_4$ in the network are nonlocal and thus confirm that the generated distribution corresponds to the FQNN in Fig. \ref{chain} (a).

\begin{figure}
  \centering
\subfigure[]{\includegraphics[width=3in]{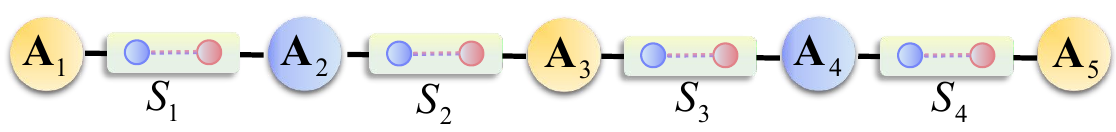}}
  \subfigure[]{\includegraphics[width=3in]{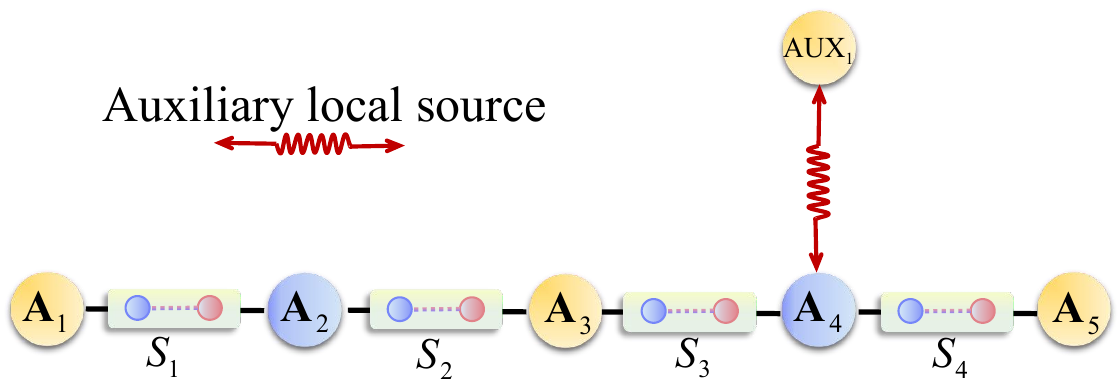}}
  \caption{ (a) Chain network in which parties ${\bf A}_{i}$ and ${\bf A}_{i+1}$ share an independent source $S_i$ for any $i\in\{1,2,3,4\}$. (b) A new network is formed by adding an auxiliary local source.
  }\label{chain}
\end{figure}

\subsection{Cyclic networks}

For a cyclic network $\Xi(n,n)$ comprising $n$ parties and $n$ sources, the presence of FQNN in the generated correlations can still be detected by introducing auxiliary local sources. It follows directly that the maximum independent number of the network is $\frac{n}{2}$ when $n$ is even, and $\frac{n-1}{2}$ when $n$ is odd (see Fig. \ref{cyclicn} (a)). We first consider the case where $n$ is even in the following discussion. To detect whether the correlations generated by network $\Xi(n,n)$ are of FQNN, we mainly consider the following steps:

We first identify the parties corresponding to the maximum independence number in the network, and denote the set of their indices as $\Gamma$. It is evident that when $n$ is even, the parties with even indices are all independent. Thus, $\Gamma=\{2,4,\ldots,n\}$. Denote the remaining parties as $\bar{\Gamma}=\{1,2,\ldots,n\}\backslash\Gamma$.

If $v\in\bar{\Gamma}$, then add $\frac{n}{2}$ auxiliary local sources and the corresponding auxiliary parties to party ${\bf A}_v$. In Fig. \ref{cyclicn} (b), $v=5$ and $k=\frac{n}{2}$. Denote the network with the added auxiliary local sources and auxiliary parties as $\Xi^{\rm new}(n,n)$

For the network $\Xi^{\rm new}(n,n)$, we can label each newly added auxiliary party as ${\bf A}_{n+1},\ldots,{\bf A}_{\frac{3n}{2}}$. This implies that the maximum independence number of the network is $n$, and $w(n)=\frac{n}{2}+1$. The set of indices of the parties corresponding to the maximum independence number is denoted as $\Gamma^{\rm new}=\{2,4,\ldots,n,n+1,\ldots,\frac{3n}{2}\}$, and the set of indices of the remaining parties is denoted as $\bar{\Gamma}^{\rm new}=\bar{\Gamma}=\{1,3,...,n-1\}$. Moreover, it follows from Ineq. (\ref{l=w(n)}) that we can  construct the following inequality
\begin{align}\label{cn}
|I_{\rm cyco}|^{\frac{1}{n}}+|J_{\rm cyco}|^{\frac{1}{n}}\leq
2^{\frac{n-1}{2n}}
\end{align}
where $I_{\rm cyco}=\langle \Pi_{i\in\Gamma^{\rm new}}A^+_{x_i}\Pi_{j\in\bar{\Gamma}^{\rm new}}A_{x_j=0}\rangle$ and
$J_{\rm cyco}=\langle \Pi_{i\in\Gamma^{\rm new}}A^-_{x_i}\Pi_{j\in\bar{\Gamma}^{\rm new}}A_{x_j=1}\rangle$.

We now choose the optimal measurements to compute the left-hand side of Ineq. (\ref{cn}). Violation of Ineq. (\ref{cn})
means that there exists at most $\frac{n}{2}$ local sources. Given that there are already $\frac{n}{2}$ local sources in the current network, it can be inferred that the correlations generated by network $\Xi(n,n)$ are of FQNN.

\begin{figure}
  \centering
\subfigure[]{\includegraphics[width=1.6in]{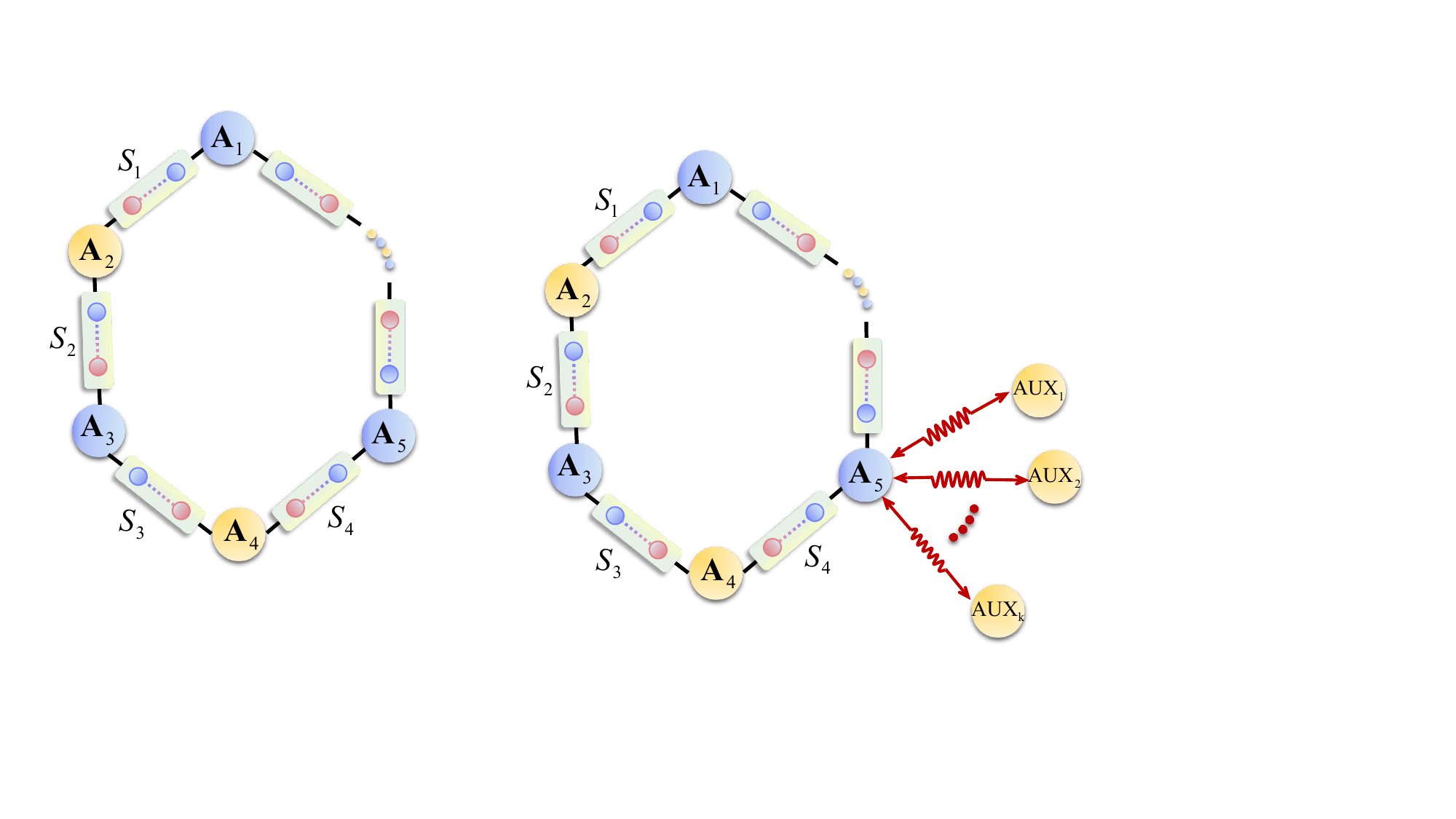}}
  \subfigure[]{\includegraphics[width=2.3in]{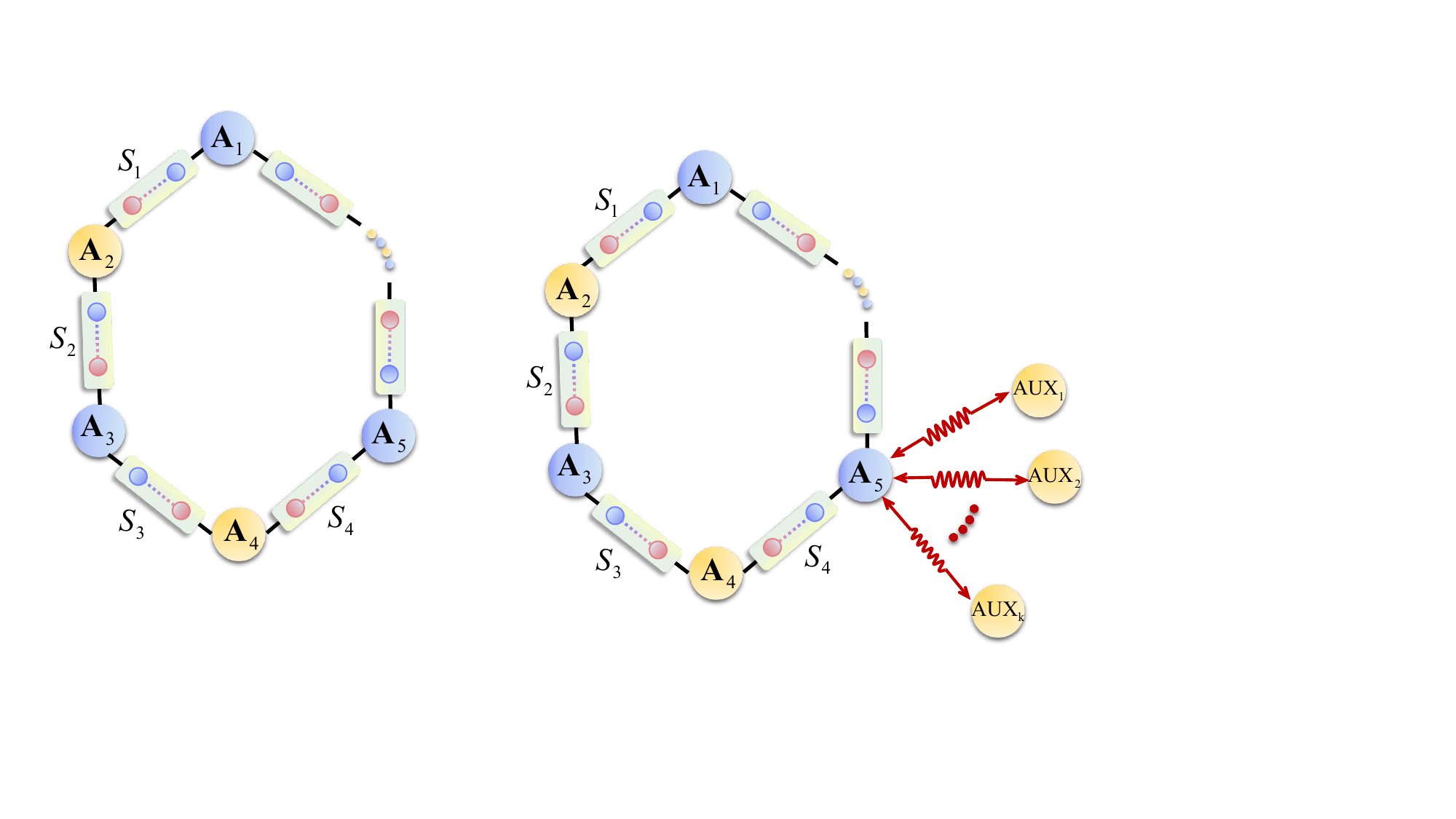}}
  \caption{
  (a) A cyclic network $\Xi(n,n)$ composed of $n$ parties and $n$ sources. (b) The network $\Xi^{\rm new}(n,n)$, which is formed by adding $k$ auxiliary local sources and  $k$ auxiliary parties (${\rm AUX}_1,{\rm AUX}_2,\ldots, {\rm AUX_k}$) to a cyclic network $\Xi(n,n)$.
   }\label{cyclicn}
\end{figure}

To summarize, for the cyclic networks $\Xi(n,n)$,  the maximum number of independent parties is $\frac{n}{2}$, and all parties with even indices are independent. By adding $\frac{n}{2}$ auxiliary local sources to party ${\bf A}_i$ ($i\in\{1,3,...,n-1\}$) and its connected parties, we construct a new network $\Xi^{\rm new}(n,n)$. We then obtain the following results.

\begin{corollary}
The network $\Xi(n,n)$ is FQNN if there are suitable measurements in $\Xi^{\rm new}(n,n)$ such that the generated correlations violate the inequality
\begin{align}
|I_{\rm cyce}|^{\frac{1}{n}}+|J_{\rm cyce}|^{\frac{1}{n}}\leq2^{\frac{n-1}{2n}},
\end{align}
where $I_{\rm cyce}=\langle \Pi_{i\in\Gamma^{\rm new}}A^+_{x_i}\Pi_{j\in\bar{\Gamma}^{\rm new}}A_{x_j=0}\rangle$ and
$J_{\rm cyce}=\langle \Pi_{i\in\Gamma^{\rm new}}A^-_{x_i}\Pi_{j\in\bar{\Gamma}^{\rm new}}A_{x_j=1}\rangle$ with $\Gamma^{\rm new}=\{2,4,\ldots,n,n+1,\ldots,\frac{3n}{2}\}$ and  $\bar{\Gamma}^{\rm new}=\{1,3,...,n-1\}$.
\end{corollary}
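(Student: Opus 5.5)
The corollary follows from Theorem \ref{main} applied to the enlarged network $\Xi^{\rm new}(n,n)$, plus a counting argument on local sources. The plan has three steps: determine the parameters $h$, $\Gamma^{\rm new}$ and $w$ of $\Xi^{\rm new}(n,n)$; read off the inequality from Theorem \ref{main}; then argue by contraposition that any local source in the original cycle forces the inequality to hold.

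\textbf{Step 1 (topology of $\Xi^{\rm new}(n,n)$).} For even $n$ the cycle ${\bf A}_1,\ldots,{\bf A}_n$ has independence number $n/2$, attained by the even-indexed parties. Attaching $n/2$ auxiliary parties ${\bf A}_{n+1},\ldots,{\bf A}_{3n/2}$ to an odd-indexed party ${\bf A}_v$, each through its own auxiliary source, gives new parties that share a source only with ${\bf A}_v$. Since $v\notin\Gamma$, the auxiliary parties are independent of every even-indexed party. Hence $\Gamma^{\rm new}=\{2,4,\ldots,n,n+1,\ldots,3n/2\}$ is independent of size $n$.

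Maximality has to be checked: any independent set uses at most $n/2$ cycle parties plus at most $n/2$ leaves, and if it contains ${\bf A}_v$ it loses all the leaves. So $h=n$, and Theorem \ref{main} with $h=n$ gives exactly the stated bound $2^{(n-1)/(2n)}$, with $\bar\Gamma^{\rm new}=\{1,3,\ldots,n-1\}$.

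\textbf{Step 2 (value of $w$).} Here I want $w=\frac n2+1$: whenever $\Xi^{\rm new}(n,n)$ contains $\frac n2+1$ local sources, some party in $\Gamma^{\rm new}$ receives only local sources, which is condition (P). I will not rely on Eq.~(\ref{g3.6}) as a black box. Instead I will check (P) directly for the configuration relevant to the corollary.

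In that configuration the $n/2$ auxiliary sources are local by construction. If in addition at least one original cycle source $S_j$ is local, then the total number of local sources is at least $\frac n2+1$. I then need some party of $\Gamma^{\rm new}$ that sees only local sources.

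\textbf{Main obstacle.} Every auxiliary party ${\bf A}_{n+1},\ldots,{\bf A}_{3n/2}$ receives only its own auxiliary local source, so (P) is immediate in this configuration. The real subtlety is that the definition of $w$ quantifies over \emph{all} placements of $w$ local sources, not just this one. For example, $w$ local sources placed entirely on the cycle might fail to isolate any even party. To avoid this issue, I will use the proof of Theorem \ref{main} directly, since that proof only needs one party $t\in\Gamma$ receiving only local sources.

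In our configuration any auxiliary party serves as ${\bf A}_t$, so the proof of Theorem \ref{main} applies verbatim. The factorization $\langle A^\pm_{x_t}\rangle$, the bound of Ref.~\cite{PhysRevLett.120.140402} on the remaining $(n-1)$-independent network, and the trigonometric maximization together give $|I_{\rm cyce}|^{1/n}+|J_{\rm cyce}|^{1/n}\le 2^{(n-1)/(2n)}$.

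\textbf{Step 3 (conclusion).} Argue by contraposition. Suppose some source of the original $\Xi(n,n)$ is local. Then $\Xi^{\rm new}(n,n)$ has at least $\frac n2+1$ local sources, so it is $w$-QNL, and by Steps 1–2 the inequality holds for every choice of measurements. Therefore a violation for some measurements implies that all $n$ original cycle sources are nonlocal. Since the original network's correlations are the marginals of those of the enlarged network, this means $\Xi(n,n)$ is FQNN.
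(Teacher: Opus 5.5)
Your skeleton is the paper's: compute $h=n$ for $\Xi^{\rm new}(n,n)$, apply Theorem~\ref{main}, then count local sources. Once one step is repaired (below), the implication does follow. But your Step~2 shortcut shows it follows only \emph{vacuously}, and you did not draw that conclusion.

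The leaves $\mathbf{A}_{n+1},\ldots,\mathbf{A}_{3n/2}$ receive only auxiliary local sources in every configuration, whether or not some cycle source is local. So if the proof of Theorem~\ref{main} runs with $\mathbf{A}_t=\mathbf{A}_{n+1}$, it yields the inequality for \emph{all} correlations of $\Xi^{\rm new}(n,n)$. The local cycle source you assume in Step~3 is never used. What you have actually proved is that the inequality can never be violated, so the corollary can never certify FQNN.

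The paper's argument has the same feature, hidden behind the bookkeeping $w=\frac n2+1$: its proof of Theorem~\ref{main} uses nothing but condition (P), and the leaves make (P) automatic. This also clashes with the paper's numerics. In Example~1, $\mathbf{A}_4$ plays exactly the role of your $\mathbf{A}_t$, so the reported value $1.357>2^{1/3}$ cannot be correct.

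Two further points.

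\textbf{The step is not ``verbatim''.} The first step of that proof replaces $\langle A^\pm_{x_t}\cdot(\mathrm{rest})\rangle$ by $\langle A^\pm_{x_t}\rangle\langle \mathrm{rest}\rangle$. This is false here, because $\mathbf{A}_{n+1}$ shares its hidden variable $\lambda$ with $\mathbf{A}_v$. Condition on $\lambda$ instead:
$|I_{\rm cyce}|\le\bigl(\int|\langle A^+_{x_t}\rangle_\lambda|\,d\mu(\lambda)\bigr)\sup_\lambda|R^+_\lambda|$,
where $R^+_\lambda$ is the correlator of the remaining parties given $\lambda$, and likewise for $J_{\rm cyce}$. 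Since $v\in\bar\Gamma^{\rm new}$ uses input $0$ only in $I_{\rm cyce}$ and input $1$ only in $J_{\rm cyce}$, the two near-optimal branches combine into one legitimate strategy on the remaining network. The $\sqrt2$ bound used in that proof and the H\"older step then still give $2^{\frac{n-1}{2n}}$; using all $n/2$ leaves one even gets $2^{1/4}$. So the vacuity is real, not an artefact of the faulty factorization.

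\textbf{Your worry about placements is unfounded.} The $n$ cycle sources split into $n/2$ disjoint pairs, one pair feeding each even party. Any $\frac n2+1$ local sources therefore either include an auxiliary one or exhaust a pair. Hence $w=\frac n2+1$ holds over all placements, and Theorem~\ref{main} can be quoted as a black box, as the paper does.
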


When $n$ is odd, we can also use a similar method to determine whether the correlations generated by network $\Xi(n,n)$ are of FQNN by constructing inequalities.
The only difference is that the number of additional auxiliary local sources and parties changes from the original $\frac{n}{2}$ to $\frac{n+1}{2}$.

For the cyclic networks $\Xi(n,n)$,  the maximum number of independent parties is $\frac{n-1}{2}$, and all parties with even indices are independent. By adding $\frac{n+1}{2}$ auxiliary local sources to party ${\bf A}_i$ ($i\in\{1,3,...,n\}$) and its connected parties, we construct a new network $\Xi^{\rm new}(n,n)$. For this extended network we have:

\begin{corollary}
The network $\Xi(n,n)$ is FQNN if there are measurements in $\Xi^{\rm new}(n,n)$ such that the generated correlations violate the inequality
\begin{align}
|I_{\rm cyco}|^{\frac{1}{n}}+|J_{\rm cyco}|^{\frac{1}{n}}\leq2^{\frac{n-1}{2n}},
\end{align}
where $I_{\rm cyco}=\langle \Pi_{i\in\Gamma^{\rm new}}A^+_{x_i}\Pi_{j\in\bar{\Gamma}^{\rm new}}A_{x_j=0}\rangle$ and
$J_{\rm cyco}=\langle \Pi_{i\in\Gamma^{\rm new}}A^-_{x_i}\Pi_{j\in\bar{\Gamma}^{\rm new}}A_{x_j=1}\rangle$ with $\Gamma^{\rm new}=\{2,4,\ldots,n,n+1,\ldots,\frac{3n+1}{2}\}$ and  $\bar{\Gamma}^{\rm new}=\{1,3,...,n\}$.
\end{corollary}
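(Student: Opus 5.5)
The plan is to derive this corollary from Theorem \ref{main}, applied to the enlarged network $\Xi^{\rm new}(n,n)$ for odd $n$, and then use the count of auxiliary local sources to conclude that every original source is nonlocal. The argument has three steps: describe the structure of $\Xi^{\rm new}(n,n)$, compute its parameters $h$, $\Gamma^{\rm new}$ and $w$, and combine the violation with the known number of local sources.

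\emph{Structure of the enlarged network.} In the odd cycle, take the even-indexed parties ${\bf A}_2,{\bf A}_4,\ldots,{\bf A}_{n-1}$ as the independent set. This gives $\frac{n-1}{2}$ parties, and no larger independent set exists in a cycle of odd length. The odd-indexed parties $\{1,3,\ldots,n\}$ form $\bar{\Gamma}$; note that ${\bf A}_n$ and ${\bf A}_1$ are adjacent and share source $S_n$. We then attach $\frac{n+1}{2}$ auxiliary local sources, each with its own new party ${\bf A}_{n+1},\ldots,{\bf A}_{\frac{3n+1}{2}}$, to a party in $\bar{\Gamma}$. Each new party receives a single source. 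Hence the new parties are independent of each other and of the even-indexed parties. The augmented independent set is therefore $\Gamma^{\rm new}=\{2,4,\ldots,n-1,n+1,\ldots,\frac{3n+1}{2}\}$, of size $h=\frac{n-1}{2}+\frac{n+1}{2}=n$, which matches the exponent $\frac{1}{n}$ and the bound $2^{\frac{n-1}{2n}}$. I would also check maximality: any independent set uses at most $\frac{n-1}{2}$ cycle parties, and at most one party per auxiliary source.

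\emph{Computing $w$.} The next step is to find $w$ for $\Xi^{\rm new}(n,n)$ using Eq. (\ref{g3.6}), or directly from the definition. Each even-indexed party receives two sources, and each auxiliary party receives one. So $\sum_{i\in\Gamma^{\rm new}}|\Lambda_i| = 2\cdot\frac{n-1}{2}+\frac{n+1}{2}$. For odd $n$ the source $S_n$ between ${\bf A}_n$ and ${\bf A}_1$ is not covered by $\Gamma^{\rm new}$, so the second branch of Eq. (\ref{g3.6}) applies.

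The direct pigeonhole argument is cleaner. Condition (P) fails exactly when every party in $\Gamma^{\rm new}$ receives at least one nonlocal source. The largest number of local sources compatible with that failure is obtained by making every source outside a minimal "one nonlocal per independent party" selection local. I expect this to give $w=\frac{n+1}{2}+1$, by the same reasoning as in the even case, which yielded $w=\frac{n}{2}+1$. \textbf{This step is the main obstacle}: pinning down $w$ exactly for the odd cycle, including the uncovered source $S_n$, and reconciling that value with formula (\ref{g3.6}). I would first try the explicit extremal counting rather than the formula.

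\emph{Conclusion.} Theorem \ref{main} says that $w$-QNL correlations of $\Xi^{\rm new}(n,n)$ satisfy the stated inequality. So a violation certifies $w$-QNN, meaning at most $w-1$ sources are local. The $\frac{n+1}{2}$ auxiliary sources are local by construction. If $w-1=\frac{n+1}{2}$, this already exhausts the allowed number of local sources, so no original source $S_1,\ldots,S_n$ can be local. Hence $\Xi(n,n)$ exhibits FQNN. The correctness of the corollary therefore hinges on the equality $w-1=\frac{n+1}{2}$, which is exactly what the count in the previous step must establish.
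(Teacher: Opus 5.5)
Your plan follows the paper's route exactly: apply Theorem~\ref{main} to $\Xi^{\rm new}(n,n)$ with the independent set $\{2,4,\ldots,n-1\}\cup\{n+1,\ldots,\frac{3n+1}{2}\}$ of size $n$ (you rightly read the statement's ``$2,4,\ldots,n$'' as ending at $n-1$), then observe that the $\frac{n+1}{2}$ auxiliary local sources already use up the $w-1$ permitted local sources. The step you flag as the main obstacle closes at once with your own pigeonhole idea, which the paper only asserts by analogy with the even case: no source reaches two parties of $\Gamma^{\rm new}$, so keeping (P) false requires one nonlocal source per independent party, i.e.\ at least $h_{\rm new}=n$ of the $m_{\rm new}=n+\frac{n+1}{2}$ sources, and $n$ suffice, giving $w-1=m_{\rm new}-h_{\rm new}=\frac{n+1}{2}$; this agrees with the second branch of Eq.~(\ref{g3.6}), where $S_n$ contributes $|\Lambda_1\cap\Lambda_n|=1$, and in fact only the pigeonhole direction $w-1\le\frac{n+1}{2}$ is needed.
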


\begin{figure}
  \centering
\subfigure[]{\includegraphics[width=1.5in]{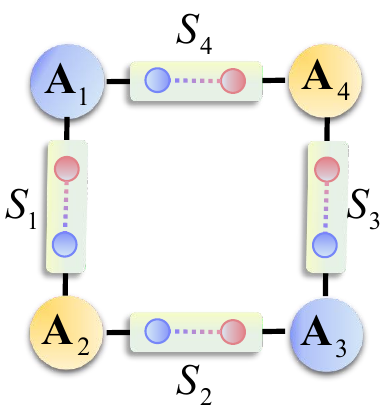}}
  \subfigure[]{\includegraphics[width=2.5in]{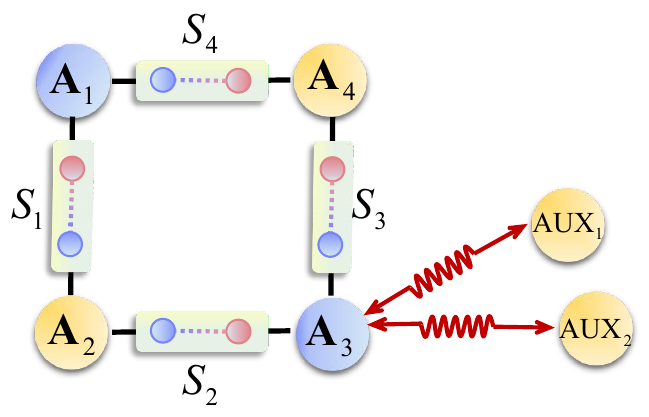}}
  \caption{ The yellow circles in the figure represent independent parties.  (a) A cyclic network $\Xi(4,4)$.
  (b) The network $\Xi^{\rm new}(4,4)$, which is formed by adding 2 auxiliary local sources and 2 auxiliary parties to a cyclic $\Xi(4,4)$. }\label{cyclic1}
\end{figure}

\textit{Example 3}.  For a cyclic network $\Xi(4,4)$ shown in Fig. \ref{cyclic1} (a) $\Xi(4,4)$. By selecting an arbitrary party from the set of non-independent parties in each network and adjoining several auxiliary local sources together with a corresponding auxiliary party at that location, we construct modified networks. Specifically, when 2 local sources are added, the resulting networks are illustrated in Fig. \ref{cyclic1} (b), which we denote as $\Xi^{\text{new}}(4,4)$.

For the network $\Xi^{\rm new}(4,4)$, denote the two auxiliary parties as ${\bf A}_5$ and ${\bf A}_6$. It follows from Eq. (\ref{g3.6}) that we get $w=3$. Thus, if the correlations generated by this network are $3$-QNL, then
Ineq. (\ref{l=w(n)}) can be rewritten as
\begin{align}\label{cyclic4}
|I_3|^{\frac{1}{4}}+|J_3|^{\frac{1}{4}}\leq
2^{\frac{3}{8}},
\end{align}
where $I_3=\langle \Pi_{i\in\Gamma}A^+_{x_i}\Pi_{j\in\bar{\Gamma}}A_{x_j=0}\rangle$ and
$J_3=\langle \Pi_{i\in\Gamma}A^-_{x_i}\Pi_{j\in\bar{\Gamma}}A_{x_j=1}\rangle$ with $\Gamma=\{2,4,5,6\}$ and $\bar{\Gamma}=\{1,3\}$.  Similarly to the approach in Example 1, one can use SLSQP algorithm to find a measurement scheme that violates the inequality. Violation of Ineq. (\ref{cyclic4}) implies that there exist at most 2 local sources in the network. Given that two local sources already exist in the network, it can be inferred that all sources in network $\Xi(4,4)$ are nonlocal. This implies that the correlations generated by network $\Xi(4,4)$ are of FQNN.

\subsection{Networks with arbitrary configuration}

Consider an arbitrary quantum network denoted as $\Xi(n,m)$, which consists of $n$ parties ${\bf A}_1, {\bf A}_2$, \ldots, ${\bf A}_n$ and $m$ sources $S_1,S_2,\ldots,S_m$. To obtain the inequality criteria for determining the FQNN of a network $\Xi(n,m)$, we need to construct the corresponding inequality criteria through the following four steps.

Suppose the maximum independence number of $\Xi(n,m)$ is $h$. Then, the parties that constitute this maximum independent set are denoted as ${\bf A}_{i_1}, {\bf A}_{i_2},\ldots, {\bf A}_{i_h}$. We define  $\Gamma=\{i_1,i_2,\ldots,i_h\}$ as the set of their indices, and let $\bar{\Gamma}=\{1,2,\ldots,n\}\backslash\Gamma$ be the set of indices for the remaining parties.

We then select an element from the set $\bar{\Gamma}$. If $r\in \bar{\Gamma}$, add the $N(n,m)$ auxiliary local sources and auxiliary parties to the party ${\bf A}_r$, where $N(n,m)=\sum_{i_1,i_2\in\bar{\Gamma}}|\Lambda_{i_1}\cap\Lambda_{i_2}|+\sum_{i\in\Gamma}|\Lambda_i|-h$. We denote the resulting network as $\Xi^{\rm new}(n,m)$. The schematic diagram is shown in Fig. \ref{shiyi}, where local sources are added exclusively to non-independent parties. Clearly, the maximum independence number of $\Xi^{\rm new}(n,m)$ is $h+N(n,m)$.

For the network $\Xi^{\rm new}(n,m)$,
label each newly added auxiliary party as ${\bf A}_{n+1},{\bf A}_{n+2},\ldots,{\bf A}_{n+N(n,m)}$. According to Eq. (\ref{g3.6}), we can get $w=\sum_{i_1,i_2\in\bar{\Gamma}}|\Lambda_{i_1}\cap\Lambda_{i_2}|+\sum_{i\in\Gamma}|\Lambda_i|-h+1$. Ineq. (\ref{l=w(n)}) can be rewritten as
\begin{align}\label{gn}
|I_{\rm gen}|^{\frac{1}{h+N(n,m)}}+|J_{\rm gen}|^{\frac{1}{h+N(n,m)}}\leq
2^{\frac{h+N(n,m)-1}{2(h+N(n,m))}}
\end{align}
where $I_{\rm gen}=\langle \Pi_{i\in\Gamma^{\rm new}}A^+_{x_i}\Pi_{j\in\bar{\Gamma}^{\rm new}}A_{x_j=0}\rangle$ and
$J_{\rm gen}=\langle \Pi_{i\in\Gamma^{\rm new}}A^-_{x_i}\Pi_{j\in\bar{\Gamma}^{\rm new}}A_{x_j=1}\rangle$ with $\Gamma^{\rm new}=\Gamma\cup\{n+1,n+2,\ldots,n+N(n,m)\}$ and $\bar{\Gamma}^{\rm new}=\bar{\Gamma}$.

Finally, we choose the optimal measurements to compute the left-hand side of Ineq. (\ref{gn}).
Due to the complexity of the network structure, the measurements performed by some parties are relatively complex. Therefore, we can employ SLSQP algorithm to maximize the value on the left-hand side of Ineq. (\ref{gn}), with the aim of violating the inequality for given source.
Violation of Ineq. (\ref{gn})
means that that the number of local sources in the network is at most $w-1$. It is worth noting that $w(n)-1=N(n,m)$.
Since the network already contains $N(n,m)$ local sources, the correlations generated by $\Xi(n,m)$ are therefore FQNN.

\begin{figure}
  \centering
  {\includegraphics[width=2.3
  in]{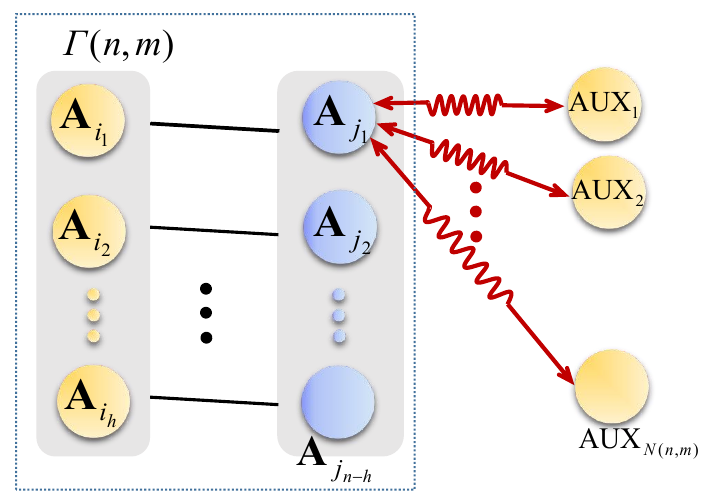}}
  \caption{  ${\bf A}_{i_1}, {\bf A}_{i_2},...,{\bf A}_{i_h}$ and ${\bf A}_{j_1}, {\bf A}_{j_2},...,{\bf A}_{j_{n-h}}$ represent independent parties and non-independent parties, respectively, with the black lines indicating the sources shared among the parties. The schematic illustrates the newly formed network after adding $N(n,m)$ additional local sources and parties to the original network.
   }\label{shiyi}
\end{figure}

In summary, for a  network $\Xi(n,m)$, suppose that  the maximum number of independent parties is $h$ with $\Gamma=\{i_1,i_2,...,i_h\}$ and $\bar{\Gamma}=\{1,2,\ldots,n\}\backslash\Gamma$. By adding $N(n,m)=\sum_{i_1,i_2\in\bar{\Gamma}}|\Lambda_{i_1}\cap\Lambda_{i_2}|+\sum_{i\in\Gamma}|\Lambda_i|-h$ auxiliary local sources to party ${\bf A}_i$ ($i\in\bar{\Gamma}$) and its connected parties shown in Fig. \ref{shiyi}, we construct a new network $\Xi^{\rm new}(n,m)$. We then propose the following corollary.

\begin{corollary}
The network $\Xi(n,m)$ is FQNN if there are suitable measurements in $\Xi^{\rm new}(n,m)$ such that the generated correlations violate the inequality
\begin{align}\label{gen}
|I_{\rm gen}|^{\frac{1}{h+N(n,m)}}+|J_{\rm gen}|^{\frac{1}{h+N(n,m)}}\leq2^{\frac{h+N(n,m)-1}{2(h+N(n,m))}},
\end{align}
where $I_{\rm gen}=\langle \Pi_{i\in\Gamma^{\rm new}}A^+_{x_i}\Pi_{j\in\bar{\Gamma}^{\rm new}}A_{x_j=0}\rangle$ and
$J_{\rm gen}=\langle \Pi_{i\in\Gamma^{\rm new}}A^-_{x_i}\Pi_{j\in\bar{\Gamma}^{\rm new}}A_{x_j=1}\rangle$ with $\Gamma^{\rm new}=\Gamma\cup\{n+1,n+2,\ldots,n+N(n,m)\}$ and $\bar{\Gamma}^{\rm new}=\bar{\Gamma}$.
\end{corollary}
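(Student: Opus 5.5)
The plan is to derive the final corollary directly from Theorem~\ref{main}, applied to the enlarged network $\Xi^{\rm new}(n,m)$ rather than to $\Xi(n,m)$. The argument is by contraposition. Suppose $\Xi(n,m)$ is not FQNN, so at least one original source $S_k$ can be replaced by a local source while reproducing the correlations. We show that the correlations of $\Xi^{\rm new}(n,m)$ then satisfy Ineq.~(\ref{gen}).

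The first step is to fix the structural data of $\Xi^{\rm new}(n,m)$. Each auxiliary party ${\bf A}_{n+1},\ldots,{\bf A}_{n+N(n,m)}$ is attached to ${\bf A}_r$ (with $r\in\bar{\Gamma}$) by its own auxiliary source. No two auxiliary parties share a source, and no auxiliary party shares a source with any ${\bf A}_i$, $i\in\Gamma$. Hence $\Gamma^{\rm new}=\Gamma\cup\{n+1,\ldots,n+N(n,m)\}$ is an independent set of size $h+N(n,m)$. I take this as the maximal independent set used in Theorem~\ref{main}, as the paper asserts, with $\bar{\Gamma}^{\rm new}=\bar{\Gamma}$. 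The $\Lambda_i$ for $i\in\bar{\Gamma}$ only change at $r$, which gains the auxiliary sources. Substituting into Eq.~(\ref{g3.6}) should give
\[
w^{\rm new}=N(n,m)+1 .
\]
Theorem~\ref{main} with $h$ replaced by $h+N(n,m)$ then yields exactly Ineq.~(\ref{gen}) for every $w^{\rm new}$-QNL correlation of $\Xi^{\rm new}(n,m)$.

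The second step counts local sources. The $N(n,m)$ auxiliary sources are local by construction. If some original source were local, $\Xi^{\rm new}(n,m)$ would contain at least $N(n,m)+1=w^{\rm new}$ local sources, so its correlations would be $w^{\rm new}$-QNL. By the first step Ineq.~(\ref{gen}) would then hold. Contrapositively, a violation forces every $S_1,\ldots,S_m$ to be nonlocal. The auxiliary parties only marginalize or post-process, so this is precisely the statement that $\Xi(n,m)$ produces FQNN.

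The main obstacle is the first step: verifying that Eq.~(\ref{g3.6}) really evaluates to $N(n,m)+1$ on the enlarged network. Behind this is checking that $w^{\rm new}$ local sources force property (P), namely that some independent party receives only local sources. The paper states Eq.~(\ref{g3.6}) without proof, so I would check (P) directly by pigeonhole. The independent parties in $\Gamma^{\rm new}$ receive $\sum_{i\in\Gamma}|\Lambda_i|+N(n,m)$ sources in total, and each auxiliary party receives exactly one. Any configuration of local sources avoiding (P) must leave at least one nonlocal source in each of the $h+N(n,m)$ independent parties. For the original parties that costs $\sum_{i\in\Gamma}|\Lambda_i|-h$ sources that could be local, and the remaining local sources can hide only among sources internal to $\bar{\Gamma}$, counted by $\sum|\Lambda_{i_1}\cap\Lambda_{i_2}|$. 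With $N(n,m)+1$ local sources, this forces either one auxiliary source or a full $\Lambda_i$ with $i\in\Gamma$ to be local, giving (P).

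I expect the delicate part to be the care needed with sources shared among several independent parties and with the sources counted by the sum over $\bar{\Gamma}$ pairs. If the count is slightly off, the conclusion still holds provided $N(n,m)$ is taken at least as large as the true $w-1$ of the enlarged network.
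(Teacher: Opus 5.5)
Your argument is the paper's own. You apply Theorem~\ref{main} to $\Xi^{\rm new}(n,m)$ with independent set $\Gamma^{\rm new}$, read off $w^{\rm new}=N(n,m)+1$ from Eq.~(\ref{g3.6}), and note that the $N(n,m)$ built-in local sources plus one local $S_k$ make the data $w^{\rm new}$-QNL. Your pigeonhole check of (P), which the paper skips, is right for bipartite sources. There it amounts to $w=m-h+1$; with $N=N(n,m)=m-h$ this gives $w^{\rm new}=(m+N)-(h+N)+1=N+1$. As a deduction from Theorem~\ref{main}, the contrapositive is valid.

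What you and the paper both miss is that your own count makes the corollary vacuous. Avoiding (P) in $\Xi^{\rm new}(n,m)$ requires every auxiliary source to be nonlocal, but they are local by construction. So for $N\ge1$, property (P) holds whatever $S_1,\ldots,S_m$ are, and the proof of Theorem~\ref{main} uses nothing beyond (P).

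This does not depend on that proof's loose factorization step. Condition on the auxiliary hidden variables $\lambda_k$ and absorb the conditional marginals $\langle A^\pm_{x_{n+k}}\rangle_{\lambda_k}$ into ${\bf A}_r$'s observables. This gives $I_{\rm gen}=(\prod_k\alpha_k^+)\,I'$ and $J_{\rm gen}=(\prod_k\alpha_k^-)\,J'$. Here $\alpha_k^\pm$ is the $\lambda_k$-average of $|\langle A^\pm_{x_{n+k}}\rangle_{\lambda_k}|$, so $\alpha_k^++\alpha_k^-\le1$. The pair $(I',J')$ is realized in $\Xi(n,m)$ itself, with ${\bf A}_r$ using the renormalized absorbed operators. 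H\"older's inequality and the $\sqrt2$ quantum bound that Theorem~\ref{main} also relies on then give $|I_{\rm gen}|^{1/(h+N)}+|J_{\rm gen}|^{1/(h+N)}\le 2^{h/(2(h+N))}\le 2^{(h+N-1)/(2(h+N))}$.

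So Ineq.~(\ref{gen}) can never be violated, which means the numerical violations reported in the examples cannot be correct. The implication you prove holds only because its premise is unsatisfiable.

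Separately, your claim that the auxiliary parties ``only marginalize or post-process'' is false. Their outcomes enter $I_{\rm gen}$ and $J_{\rm gen}$ multiplicatively, and ${\bf A}_r$ measures jointly on original and auxiliary systems. Any conclusion therefore concerns the sources in the enlarged experiment, not the correlations of $\Xi(n,m)$. The paper makes the same identification without argument.
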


Based on the above discussion, for any given network $\Xi(n,m)$, a new network $\Xi^{\rm new}(n,m)$ consisting of additional sources and independent parties can be constructed from its structure by following Steps 1-3. This construction allows an inequality to be formulated based on Ineq. (\ref{gen}). Before using SLSQP algorithm, we first need to clarify the form of measurements performed by the parties and the expression on the left-hand side of Ineq. (\ref{gen}). For any party $\mathbf{A}_r$ receiving $t$ hidden variables (i.e., $|\Lambda_r|=t$), it is natural to assume its measurement strategy takes the specific form:
\begin{align}
A_{x_r=0}= \sum_{r_1,r_2,\ldots,r_t=0}^3 \alpha^{r0}_{r_1,r_2,\ldots,r_t} \sigma_{r_1} \otimes\sigma_{r_2} \otimes \cdots \otimes \sigma_{r_t}, \\
A_{x_r=1}= \sum_{r_1,r_2\ldots,r_t=0}^3 \alpha^{r1}_{r_1,r_2,\ldots,r_t} \sigma_{r_1} \otimes\sigma_{r_2} \otimes \cdots \otimes \sigma_{r_t},
\end{align}
where $\sigma_0 = I_2$, $\sigma_1 = \sigma_x$, $\sigma_2 = \sigma_y$, and $\sigma_3 = \sigma_z$. These observables are constrained by $|A_{x_r=0}| \leq 1$ and $|A_{x_r=1}| \leq 1$.
Now, consider the specific case where all sources $S_i$ in the network $\Xi^{\rm new}(n,m)$ are Werner states with parameters $p_i$.   If a source is an additionally introduced local source, its state parameter is taken to be $\frac{1}{3}$. Under this setting, if $\Lambda_{j_1}\cap\Lambda_{j_2}=\emptyset$ for any $j_1, j_2\in\bar{\Gamma}^{\rm new}$, the following results can always be computed:
\begin{align}
I_{\rm new}=&\Pi_{i\in\Gamma}(\alpha^{i0}_{i_1,i_2,...,i_{|\Lambda_i|}}+\alpha^{i1}_{i_1,i_2,...,i_{|\Lambda_i|}})\Pi_{j\in\bar{\Gamma}}\alpha^{j0}_{j_1,j_2,...,j_{|\Lambda_j|}}\nonumber \\
&\times\Pi_{j\in\bar{\Gamma}}(-p_1)^{j_1}(-p_2)^{j_2}\cdots(-p_{m+N(n,m)})^{j_{|\Lambda_j|}}, \label{new1}\\
J_{\rm new}=&\Pi_{i\in\Gamma}(\alpha^{i0}_{i_1,i_2,...,i_{|\Lambda_i|}}-\alpha^{i1}_{i_1,i_2,...,i_{|\Lambda_i|}})\Pi_{j\in\bar{\Gamma}}\alpha^{j1}_{j_1,j_2,...,j_{|\Lambda_j|}}\nonumber \\
&\times\Pi_{j\in\bar{\Gamma}}(-p_1)^{j_1}(-p_2)^{j_2}\cdots(-p_{m+N(n,m)})^{j_{|\Lambda_j|}}. \label{new2}
\end{align}
Here, the exponent is interpreted as follows: $(-p_y)^{x}=(-p_y)^{1-\delta(k,0)}$.
Using Eqs. (\ref{new1}) and (\ref{new2}), we can then employ the SLSQP algorithm to find the optimal measurements for the network, with the aim of violating Ineq. (\ref{gn}) even in the presence of local sources. If non-independent parties share a common source, then Eqs. (\ref{new1}) and (\ref{new2}) need to be modified according to the network structure.
We illustrate this with the following concrete examples.

\begin{figure}
  \centering
\subfigure[]{\includegraphics[width=1.8in]{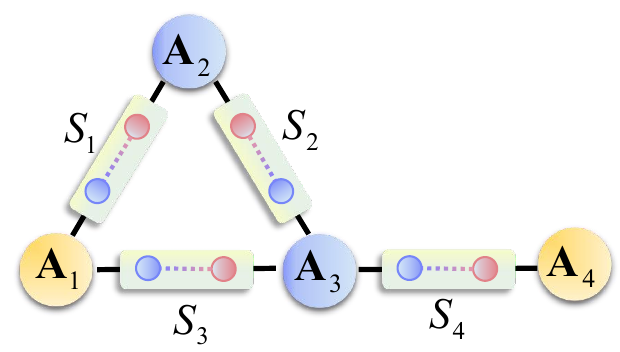}}
  \subfigure[]{\includegraphics[width=2.3
  in]{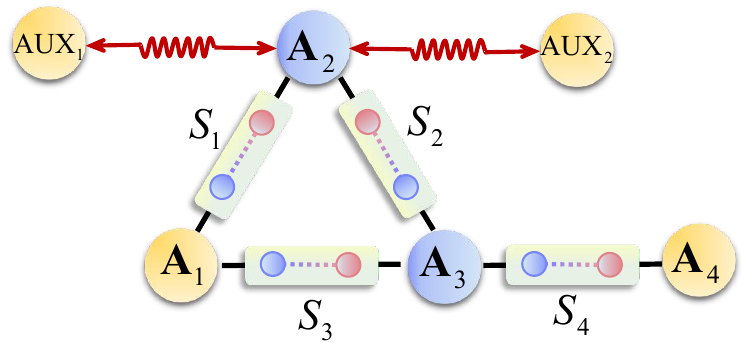}}
  \caption{  (a) A network $\Xi(4,4)$ with the maximum independence number 2. (b) The new network $\Xi^{\rm new}(4,4)$ formed after adding two auxiliary local sources and two auxiliary parties. 
   }\label{ex}
\end{figure}

\textit{Example 4}. For the network $\Xi(4,4)$ described in Fig. \ref{ex}(a),  its maximum independence number is 2 $(h=2)$, and the corresponding independent parties are ${\bf A}_1,{\bf A}_4$. Therefore, we have $\Gamma_1=\{1,4\}$ and $\bar{\Gamma}_1=\{2,3\}$.


By calculation, we can obtain $N(4,4)=1+3-2=2.$ Since $2\in\bar{\Gamma}_1$, we can add two auxiliary local sources and two parties to the party ${\bf A}_2$, forming the network $\Xi^{\rm new}(4,4)$ as shown in Fig. \ref{ex} (b). It is evident that the maximum independent number of network $\Xi^{\rm new}(4,4)$ is $h+N(4,4)=4$.
Then, we denote the added auxiliary parties as ${\bf A}_5$, ${\bf A}_{6}$, and we get $w=3$. Ineq. (\ref{gn}) can be expressed as
\begin{align}\label{ex1}
|I_{\rm new 1}|^{\frac{1}{4}}+|J_{\rm new 1}|^{\frac{1}{4}}\leq
2^{\frac{3}{8}}
\end{align}
where $I_{\rm new 1}=\langle \Pi_{i\in\Gamma^{\rm new1}}A^+_{x_i}\Pi_{j\in\bar{\Gamma}^{\rm new1}}A_{x_j=0}\rangle$ and
$J_{\rm new 1}=\langle \Pi_{i\in\Gamma^{\rm new1}}A^-_{x_i}\Pi_{j\in\bar{\Gamma}^{\rm new1}}A_{x_j=1}\rangle$ with $\Gamma^{\rm new1}=\Gamma_1\cup\{5,6\}$ and $\bar{\Gamma}^{\rm new1}=\bar{\Gamma}_1=\{2,3\}$.

Assume that each source in the network $\Xi^{\rm new}(4,4)$ is a Werner state, given by $S_i=p_i|\psi^-\rangle\langle\psi^-|+\frac{1-p_i}{4}I_2\otimes I_2$, where
  $|\psi^-\rangle=\frac{1}{\sqrt{2}}(|01\rangle-|10\rangle),$
 and  $p_i\in(\frac{1}{3},1)$ for any $i\in\{1,2,3,4\}$.  In this network, the auxiliary local sources are all denoted by $S_v$, with $S_v=\frac{1}{3}|\psi^-\rangle\langle\psi^-|+
\frac{1}{6}I_2\otimes I_2$ for any $v\in\{5,6\}$. By setting $p_1=p_2=p_3=p_4=0.96$, we ensure that each source in $\Xi(4,4)$ is nonlocal. Using the SLSQP algorithm, we can always find a set of measurements such that $|I_{\rm new1}|^{\frac{1}{4}}+|J_{\rm new1}|^{\frac{1}{4}}=1.343$ holds, i.e., Ineq. (\ref{ex1}) is violated. This means that the correlations generated by the network are FQNN.

\begin{figure}
  \centering
\subfigure[]{\includegraphics[width=1.7in]{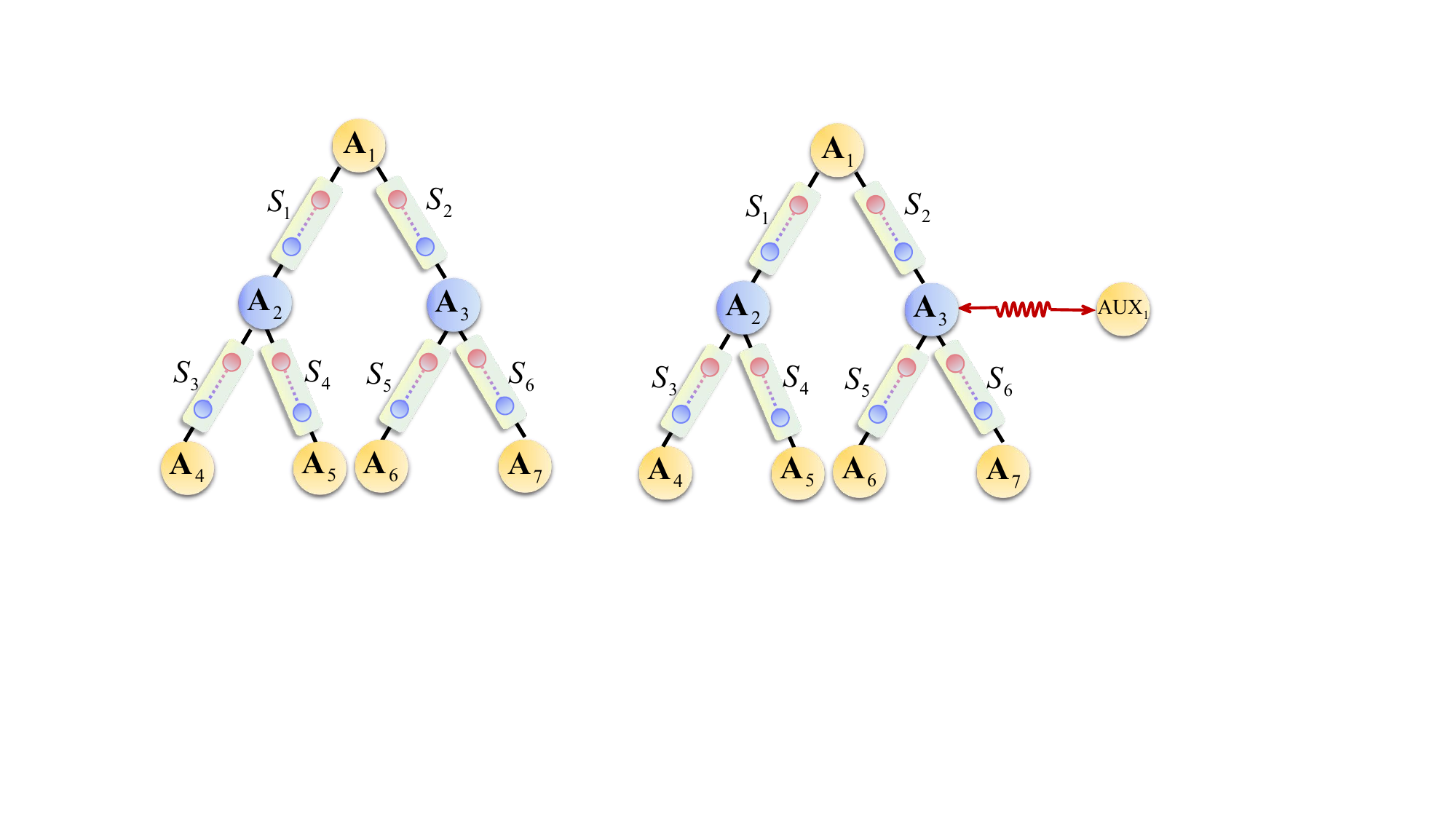}}
  \subfigure[]{\includegraphics[width=2.1
  in]{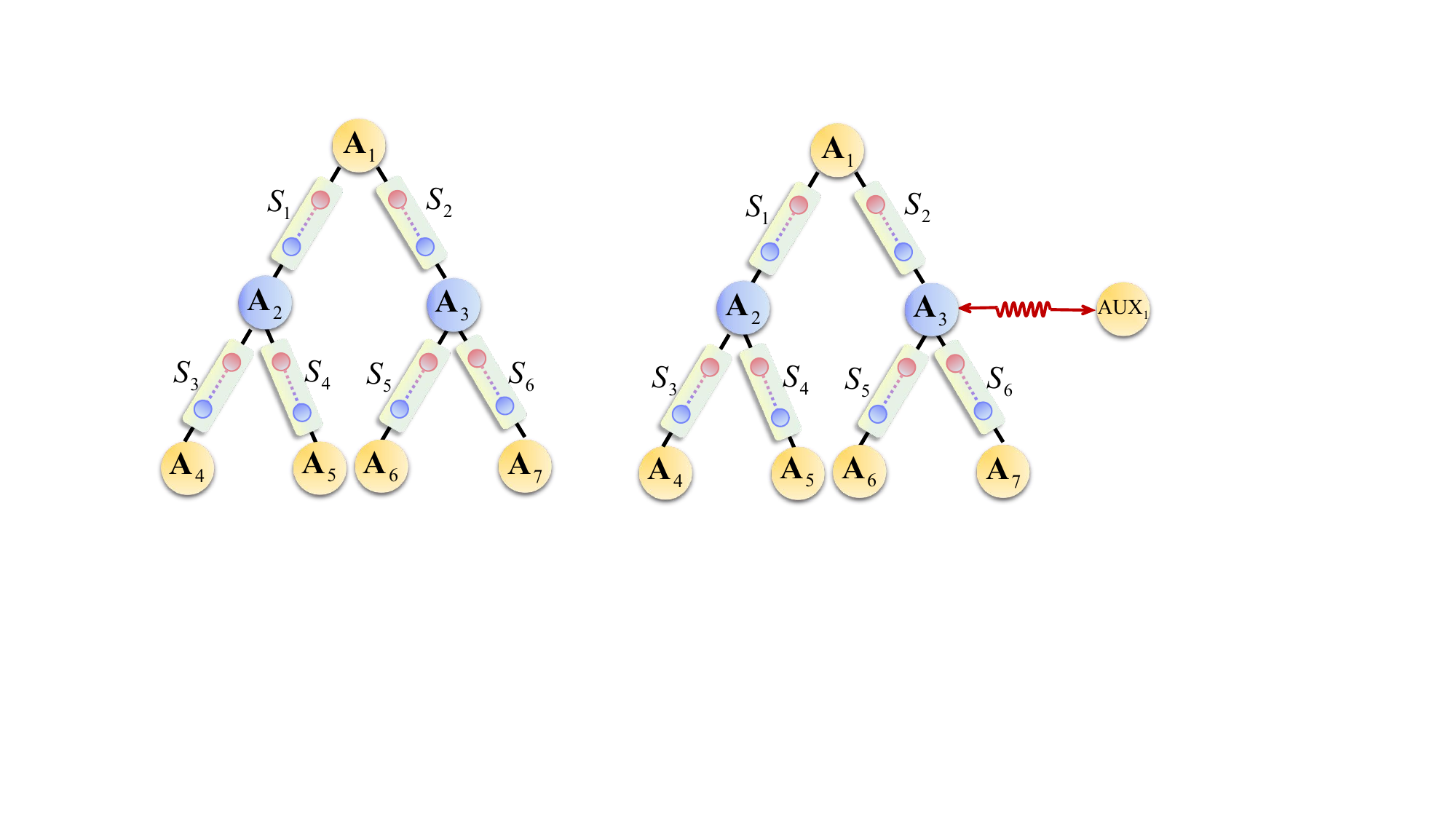}}
  \caption{  (a) A network $\Xi(7,6)$ with the maximum independence number 5. (b) The new network $\Xi^{\rm new}(7,6)$ formed after adding one auxiliary local source and one auxiliary party.  The additional party ${\rm AUX}_1$ is denoted as ${\bf A}_8$. 
   }\label{tree}
\end{figure}

\textit{Example 5}. For the 3-layer 2-forked tree-shaped network $\Xi(7,6)$ shown in Fig. \ref{tree}(a), it consists of 7 parties  ${\bf A}_{1}, {\bf A}_{2},\ldots, {\bf A}_{7}$ and 6 sources $S_1,S_2,\ldots,S_6$. The maximum independence number $h$ is 5, and the parties corresponding to the maximum independence number in the figure are represented by yellow circles. According to Steps 1-2 in this section, we obtain $\Gamma_2=\{1,4,5,6,7\}$,
$\bar\Gamma_2=\{2,3\}$ and $N(7,6)=0+6-5=1$, respectively.

Therefore, we need to add one additional local source $S_7$ and one additional party ${\bf A}_8$ in the network, forming a new network $\Xi^{\rm new}(7,6)$ as shown in Fig. \ref{tree}(b).  Clearly, the maximum independence number of $\Xi^{\rm new}(7,6)$ is 6. And $w=2$. According to Ineq. (\ref{gn}) in Step 3, one can obtain
\begin{align}\label{ex1}
|I_{\rm new 2}|^{\frac{1}{6}}+|J_{\rm new 2}|^{\frac{1}{6}}\leq
2^{\frac{5}{12}}
\end{align}
where $I_{\rm new 2}=\langle \Pi_{i\in\Gamma^{\rm new2}}A^+_{x_i}\Pi_{j\in\bar{\Gamma}^{\rm new2}}A_{x_j=0}\rangle$ and
$J_{\rm new 2}=\langle \Pi_{i\in\Gamma^{\rm new2}}A^-_{x_i}\Pi_{j\in\bar{\Gamma}^{\rm new2}}A_{x_j=1}\rangle$ with $\Gamma^{\rm new2}=\Gamma_2\cup\{8\}$ and $\bar{\Gamma}^{\rm new2}=\bar{\Gamma}_2=\{2,3\}$.
If each source in the network can be expressed in the form of $S_i=p_i|\psi^-\rangle\langle\psi^-|+\frac{1-p_i}{4}I_2\otimes I_2$ with
  $|\psi^-\rangle=\frac{1}{\sqrt{2}}(|01\rangle-|10\rangle)$, where $p_1=p_2=\cdots=p_6=0.95$ and $p_7=\frac{1}{3}$.

By employing the SLSQP algorithm, an optimal measurement setting can be found for each party, and the maximum value of $|I_{\rm new 2}|^{\frac{1}{6}}+|J_{\rm new 2}|^{\frac{1}{6}}$ is ultimately calculated to be $1.374$. Clearly, the inequality can be violated under these conditions, which demonstrates that our scheme is effective. This indeed verifies that the correlations generated by network $\Xi(7,6)$ are FQNN.

Examples 4 and 5 show that the value of the left-hand side of Ineq. (\ref{gn}) can be computed for any quantum network using the SLSQP algorithm. Thus, the observed violation of this inequality proves that all sources in the original network are nonlocal.

\section{Conclusion and  Discussion}

This paper proposes a scheme for determining whether the correlations are FQNN in a triangular network, primarily by considering the addition of auxiliary local sources and parties to the original network. This method only requires a single violation of Bell-like inequalities to reflect the FQNN of the network. Furthermore, the approach presented in this paper can be extended to chain networks, cyclic networks, and arbitrary quantum networks.

Compared to the decomposition method proposed in Ref. \cite{PhysRevA.110.022617}, the detection scheme provided in this paper significantly enhances the efficiency of detection, eliminating the need for each party to perform multiple repeated measurements. This reduction minimizes resource waste and measurement errors associated with complex measurement settings. The method provides a basis for experimentally verifying whether the correlations generated in different network structures are FQNN.

Additionally, the proposal we have put forward is anticipated to serve a critical function in the future, especially concerning the fault diagnosis within large-scale quantum networks. Quantum sources, due to their inherent fragility, are susceptible to degenerating from quantum states to classical sources, a process that could potentially trigger malfunctions throughout the entire network. Thus, detecting whether any quantum sources in the network have degenerated into classical ones has become an essential task. The method we propose for detecting correlations as FQNN  offers an efficient solution for fault detection in quantum networks.

\vskip 0.3cm {\bf Author Contributions}\quad  Professor Jinchuan Hou and Professor Kan He discussed the content of the paper and revised the paper. Professor Mingxing Luo provided suggestions for revision and modified the content of the paper. Shuyuan Yang was responsible for proving the theorems in the paper and completed the initial draft.

\vskip 0.3cm {\bf Data Availability} \quad  The data involved in the paper are available.

\vskip 0.3cm {\bf Conflict of Interest}\quad The authors declare no conflict of interest.



\bibliography{ref}
\bibliographystyle{unsrt}

\end{document}